\documentclass[11pt]{article}

\usepackage[margin=1in]{geometry}

\usepackage{amsmath,amssymb,amsthm,mathtools,bm}

\usepackage{booktabs}
\usepackage{multirow}
\usepackage{array}
\usepackage{tabularx}
\usepackage{graphicx}
\usepackage{calc}
\usepackage{xcolor}

\usepackage{siunitx}
\usepackage{enumitem}
\usepackage{url}

\usepackage{tikz}
\usepackage{pgfplots}
\pgfplotsset{compat=1.18}

\usetikzlibrary{
    arrows.meta,
    positioning,
    fit,
    calc,
    decorations.pathreplacing,
    shapes.geometric
}

\newdimen\xfigwd
\newtheorem{definition}{Definition}
\newtheorem{proposition}{Proposition}

\newtheorem{corollary}{Corollary}

\newcommand{\R}{\mathbb{R}}

\newcommand{\Xspace}{\mathcal{X}}

\newcommand{\Aset}{\mathcal{A}}
\newcommand{\Fset}{\mathcal{F}}

\newcommand{\E}{\mathbb{E}}

\newcommand{\Var}{\operatorname{Var}}
\newcommand{\Cov}{\operatorname{Cov}}
\newcommand{\corr}{\operatorname{corr}}

\newcommand{\MI}{\mathcal{I}}

\newcommand{\PLV}{\mathrm{PLV}}
\newcommand{\PLI}{\mathrm{PLI}}
\newcommand{\wPLI}{\mathrm{wPLI}}

\title{A Mathematical Framework for EEG Functional Connectivity:
Taxonomy, Measure Relationships, Identifiability, and Analytical Bounds}

\author{
Jerrin Thomas Panachakel\\[3pt]
\small School of Electrical and Electronic Engineering\\
\small Technological University Dublin, Dublin, Ireland\\
\small \texttt{jerrin.panachakel@tudublin.ie}
}

\date{}

\begin{document}

\maketitle

\begin{abstract}
Electroencephalographic (EEG) functional connectivity is represented
by a large family of statistics that quantify linear dependence,
spectral coupling, phase synchronization, amplitude-envelope
association, information transfer, directed influence, cross-frequency
interaction, or graph topology. The proliferation of these measures
creates an under-formalized inferential question: given a reported
value of one connectivity measure, what can be deduced about another?
This paper develops a mathematical framework in which an EEG
connectivity estimator is treated as a mapping from an explicitly
defined admissible state space to an observable space. For a source
connectivity functional $C_i$ and a target functional $C_j$, the set
of target values compatible with an observed source value is defined
as a cross-connectivity attainable set. The framework yields precise
criteria for exact determination, conditional determination after
state augmentation, analytical bounded inference, approximate
inference, and non-identifiability. The formulation is extended to
multichannel connectivity matrices, graph measures, and dynamic
connectivity trajectories. Particular attention is given to the fact
that admissibility is experiment-dependent: filtering, referencing,
volume conduction, source leakage, inverse modeling, epoch duration,
stationarity assumptions, frequency selection, and estimator
regularization alter the state-to-observable map. Analytical case
studies show that covariance determines correlation after marginal
scale information is supplied, Gaussian mutual information is exactly
determined by correlation only within the jointly Gaussian family,
constant-amplitude narrow-band coherence reduces to PLV, while PLI
and wPLI are generally non-identifiable from one another without
further assumptions. The paper synthesizes foundational and recent
EEG/MEG connectivity literature and converts recurring methodological
cautions into explicit mathematical objects and propositions. The
resulting framework is not a ranking of connectivity estimators; it
is a language for stating, with assumptions made visible, what one
connectivity result does and does not identify about another.
\end{abstract}

\medskip
\noindent\textbf{Keywords:}
Electroencephalography, EEG, functional connectivity, coherency,
phase synchronization, phase-lag index, weighted phase-lag index,
mutual information, Granger causality, transfer entropy,
cross-frequency coupling, graph theory, source leakage,
identifiability, attainable sets, analytical bounds.

\medskip

\section{Introduction}
\label{sec:introduction}
Functional connectivity is commonly used to quantify statistical dependence among distributed neural signals, whereas effective connectivity additionally addresses directed interactions or model-based influence \cite{Friston1994,Horwitz2003}. In electroencephalography (EEG), connectivity analysis is particularly attractive because millisecond-scale temporal resolution permits study of interactions across a wide range of temporal and oscillatory scales \cite{Varela2001,BuzsakiDraguhn2004}. The resulting literature is large and heterogeneous, spanning classical correlation and coherence, phase-based synchronization, envelope-based coupling, information-theoretic dependence, directed time-series methods, cross-frequency coupling, source-space connectivity, and graph-theoretic analysis \cite{Pereda2005,BastosSchoffelen2016,Sakkalis2011}.

The methodological diversity is scientifically useful but mathematically consequential. A connectivity statistic is a compression of an underlying signal configuration. Correlation preserves a normalized second-order relationship but discards absolute scale. Coherence preserves a frequency-specific linear dependence but not the full joint probability law. PLV preserves circular concentration of relative phase, while PLI and wPLI impose different transformations on the same phase-lag information \cite{Lachaux1999,Aydore2013,Stam2007,Vinck2010,Vinck2011,panachakel2023csp,panachakel2021classification,sidharth2023emotion}. Mutual information is sensitive to general statistical dependence, but a single correlation coefficient does not in general determine it \cite{Kraskov2004,Paninski2003,Reshef2011}. Directional measures introduce additional dependence on temporal embedding, model structure, conditioning variables, or spectral decomposition \cite{Granger1969,Geweke1982,Kaminski1991,KaminskiBlinowska2001,BaccalaSameshima2001,Dhamala2008a,Dhamala2008b,BarnettSeth2015}. Cross-frequency measures require another layer of specification because phase and amplitude estimates depend on band definitions, filtering, waveform shape, and stationarity assumptions \cite{CanoltyKnight2010,Tort2010,Aru2015,Hyafil2017}.

Several influential reviews already provide strong descriptions of individual estimators, families of estimators, and common interpretational pitfalls \cite{Pereda2005,LeocaniComi2006,BastosSchoffelen2016,Sakkalis2011,Chiarion2023}. Recent systematic reviews and perspectives further document the expanding use of EEG connectivity in disorders and connectomics, including epilepsy, depression, dementia, and acute brain injury, as well as the increasing emphasis on dynamic and network-level analyses \cite{Pegg2020,Miljevic2023,Khaleghi2024,Paitel2025,Anzalone2026,Coelli2025,ZhangChen2025}. Yet one mathematical question remains insufficiently formalized across these literatures: \emph{what does a value of one connectivity measure identify about another connectivity measure?}

The question is more demanding than asking whether two measures are correlated empirically. A deterministic relationship requires that the target measure be constant over every admissible state that produces the same source value. This is a statement about the fibers of two mappings. It is therefore distinct from an empirical regression, a high Pearson correlation between estimators, a common response to the same experimental manipulation, or agreement observed on a restricted benchmark dataset.

The central idea of this paper is to formulate cross-connectivity inference as an attainable-set problem. Let $\Xspace$ denote an explicitly specified set of admissible EEG states and let $C_i,C_j$ be source and target connectivity functionals. For an observed source value $c_i$, define the source fiber
\begin{equation}
\Fset_i(c_i)=\{x\in\Xspace:C_i(x)=c_i\}
\label{eq:fiber_intro}
\end{equation}
and the target attainable set
\begin{equation}
\Aset_{j|i}(c_i)=\{C_j(x):x\in\Xspace,\ C_i(x)=c_i\}.
\label{eq:attainable_intro}
\end{equation}
If this set is a singleton, the target is exactly determined. If it is bounded but not a singleton, the source yields an interval or more general bounded set. If it is unbounded, no finite universal target bound follows from the source observation under the selected admissible state space.

This framework has several advantages for EEG. Measurement and nuisance operations need not remain hidden in prose: the reference operator, forward mixing, filtering, source reconstruction, analysis window, and estimator parameters may all be represented explicitly in the state-to-observable map. The same mathematical definition can then describe sensor-space and source-space connectivity, static and dynamic connectivity, pairwise and graph-derived measures. The framework also clarifies why common statements such as ``measure $A$ is equivalent to measure $B$'' are meaningful only after the assumptions defining the admissible state space have been stated. Figure~\ref{fig:attainable_geometry} gives the corresponding geometric view.

The paper makes six contributions. First, it formalizes EEG connectivity measures as functionals and the associated equivalence classes they induce on signal space. Second, it introduces the cross-connectivity attainable set and a directional hierarchy of exact, conditional, bounded, approximate, and non-identifiable relationships. Third, it provides analytical results and constructive counterexamples for major connectivity families. Fourth, it treats preprocessing, referencing, volume conduction, inverse solutions, and leakage correction as transformations of the admissible state space rather than merely practical details. Fifth, it extends the framework to dynamic connectivity, connectivity matrices, and graph metrics. Sixth, it translates the mathematical analysis into a reporting and study-design protocol intended to make cross-measure claims reproducible and falsifiable.

The manuscript is structured as a Topical Review. Section~II defines terminology, scope, and notation. Section~III develops the attainable-set and identifiability framework. Section~IV maps the information retained and discarded by major EEG connectivity families. Sections~V--VIII derive representative relationships among linear, spectral, phase-based, information-theoretic, directed, and cross-frequency measures. Section~IX incorporates measurement physics, referencing, filtering, and source leakage. Sections~X and XI extend the framework to dynamic and graph-derived connectivity. Section~XII addresses finite-sample estimation and uncertainty. Section~XIII presents worked analytical examples. Section~XIV proposes a unified decision framework, Section~XV critically synthesizes the literature, Section~XVI provides reporting standards, Section~XVII states limitations, Section~XVIII develops future directions, and Section~XIX concludes the paper.

\section{Terminology, Scope, and Mathematical Notation}
\label{sec:scope}

\subsection{What is being treated as ``connectivity''?}
Throughout this paper, functional connectivity is used in its statistical sense: a quantified dependence between signals, transformed signal features, reconstructed sources, or latent representations. A nonzero functional-connectivity statistic is not interpreted as proof of an anatomical connection or a causal mechanism. This distinction is especially important for scalp EEG, for which multiple distributed sources contribute to each measured channel \cite{Friston1994,Michel2004,SchoffelenGross2009,Holler2017}.

A connectivity object may be pairwise or multivariate, undirected or directed, static or dynamic, sensor-space or source-space, and scalar, matrix-valued, or graph-valued. The mathematical framework does not require a particular estimator. Instead, each estimator is represented by a map from an admissible state to an observable.

\subsection{Research questions}
The mathematical review is organized around five questions. \textbf{RQ1:} when does one EEG connectivity observable exactly determine another over a declared signal class? \textbf{RQ2:} when exact determination fails, what additional assumptions or measured variables are sufficient to make the target identifiable? \textbf{RQ3:} what analytical bounds can be obtained without replacing a many-to-one mapping by an unjustified point conversion? \textbf{RQ4:} how do acquisition, reference, filtering, source reconstruction, leakage correction, finite sampling, and dynamic-window choices alter the attainable set? \textbf{RQ5:} how should cross-measure claims be reported so that their mathematical strength is reproducible and falsifiable?

These questions deliberately separate three layers that are often conflated in application papers: the \emph{population functional} being estimated, the \emph{measurement/processing operator} that produces the data supplied to that functional, and the \emph{estimator} used on a finite sample. Two studies can therefore use the same named connectivity measure while operating on different effective functionals because their references, filters, windows, source models, or normalization conventions differ. The principal measure families and their inferential roles are summarized in Table~\ref{tab:taxonomy}.

\subsection{Literature-curation strategy}
The paper is a mathematically driven topical review rather than a conventional clinical systematic review. Literature was curated around five evidence layers: (i) foundational mathematical definitions; (ii) original estimator papers; (iii) methodological and tutorial reviews; (iv) empirical validation, reliability, artifact, reference, and leakage studies; and (v) recent application-oriented reviews and perspectives. Priority was given to traceable journal articles or established conference papers with stable bibliographic identifiers. The final manuscript deliberately does not treat a large raw reference count as a surrogate for evidence quality. A larger bibliographic corpus is useful for literature discovery, but only sources that substantively support a claim should be cited in the main text.

\subsection{Scope}
The review covers covariance and correlation; partial correlation; coherency, coherence, and partial coherence; imaginary coherency and related lagged measures; PLV, PPC, PLI, and wPLI; amplitude-envelope connectivity; mutual information and entropy-based dependence; Granger causality, DTF, PDC, and transfer entropy; phase-amplitude and related cross-frequency coupling; multivariate and higher-order interactions; dynamic connectivity; graph/network measures; and the acquisition, reference, filtering, source-reconstruction, and leakage factors that change the resulting observables.

\begin{table*}[t]
\caption{Taxonomy of connectivity objects considered in the framework.}
\label{tab:taxonomy}
\centering
\scriptsize
\begin{tabularx}{\textwidth}{>{\raggedright\arraybackslash}p{2.3cm} >{\raggedright\arraybackslash}p{2.5cm} X X X}
\toprule
Family & Representative measures & Primary object & Information emphasized & Principal assumptions/confounds \\
\midrule
Time-domain linear & covariance, Pearson $r$, partial correlation & second-order moments & linear dependence and normalized co-fluctuation & scale, preprocessing, common reference \\
Spectral & coherency, coherence, partial coherence & cross-spectrum / auto-spectra & frequency-specific linear dependence & stationarity, tapering, spectral leakage, volume conduction \\
Phase-based & PLV, PPC, PLI, wPLI & relative-phase distribution & circular concentration, sign or magnitude-weighted lag & phase extraction, amplitude weighting, mixing \\
Amplitude-envelope & AEC, leakage-corrected AEC & band-limited amplitudes & slow envelope co-fluctuation & leakage, orthogonalization, band definition \\
Information-theoretic & MI, conditional MI, TE & probability distributions & nonlinear and/or directional statistical dependence & finite-sample bias, embedding, estimator choice \\
Directed linear & GC, DTF, PDC, PSI & temporal prediction / spectral transfer & directionality and frequency-resolved flow & model order, hidden drivers, filtering \\
Cross-frequency & PAC, phase-phase CFC, bispectral measures & joint statistics across bands & interaction between oscillatory scales & filtering, harmonics, waveform asymmetry, nonstationarity \\
Graph-derived & degree, strength, clustering, path length, modularity, centrality & matrix $\mathbf{W}$ & network organization after edge compression & thresholding, density, weights, node definition \\
Dynamic & $C(t)$, $\mathbf{W}(t)$, state sequences & trajectories / distributions over time & temporal variability, states, transitions & window length, state model, temporal dependence \\
\bottomrule
\end{tabularx}
\end{table*}

\subsection{Notation}
Let $N$ denote the number of EEG channels and $T$ the number of samples in an analysis window. The observed data matrix is $\mathbf{Y}\in\R^{N\times T}$. A latent source process is denoted by $\mathbf{S}$. A broad measurement model is
\begin{equation}
\mathbf{Y}=\mathcal{P}\,\mathcal{R}\,\mathcal{M}(\mathbf{S},\theta_m)+\mathbf{E},
\label{eq:measurement_model}
\end{equation}
where $\mathcal{M}$ is a forward or mixing operator, $\mathcal{R}$ is a reference transformation, $\mathcal{P}$ collects preprocessing operations, $\theta_m$ denotes acquisition and model parameters, and $\mathbf{E}$ contains noise and unmodeled contributions. The operators may be linear or nonlinear, fixed or data-dependent.

A connectivity functional is denoted by $C_k$. It may be scalar-valued, vector-valued, or matrix-valued. Unless stated otherwise, identifiability refers to the population or idealized functional, not to a particular finite-sample estimator. Section~XII explicitly restores estimation uncertainty.

\section{Cross-Connectivity Attainable Sets and Identifiability}
\label{sec:framework}

\subsection{Admissible state spaces}
\begin{definition}[Admissible EEG state space]
Let $\Xspace$ be the set of all signal configurations, probability laws, nuisance variables, and methodological parameters regarded as physically, statistically, and analytically admissible for the question under study. The definition of $\Xspace$ is part of the scientific claim.
\end{definition}

Examples of state restrictions include a fixed number of channels; bounded signal variance; jointly Gaussian signals; stationary narrow-band processes; a specified frequency range; a fixed reference; a known forward model; or bounded source-amplitude and noise parameters. Different choices produce different attainable sets.

\subsection{Fibers and attainable sets}
For a source functional $C_i$, the fiber associated with $c_i$ is
\begin{equation}
\Fset_i(c_i)=\{x\in\Xspace:C_i(x)=c_i\}.
\label{eq:fiber}
\end{equation}
The target attainable set is
\begin{equation}
\Aset_{j|i}(c_i)=\{C_j(x):x\in\Fset_i(c_i)\}.
\label{eq:attainable}
\end{equation}
The construction is directional even though no causal interpretation is implied. In general, $\Aset_{j|i}(c)$ and $\Aset_{i|j}(c)$ have different structure because each functional discards different information.

\begin{definition}[Exact determination]
$C_j$ is exactly determined by $C_i$ on $\Xspace$ if, for every admissible $c_i$, the set $\Aset_{j|i}(c_i)$ is a singleton.
\end{definition}

\begin{definition}[Conditional determination]
$C_j$ is conditionally determined by $C_i$ given an augmentation $Z=z$ if the augmented attainable set
\begin{equation}
\begin{aligned}
\Aset_{j|i,z}(c_i,z)
={}&\{C_j(x): C_i(x)=c_i,\\
&\qquad Z(x)=z,\ x\in\Xspace\}.
\end{aligned}
\end{equation}
is a singleton for every admissible pair $(c_i,z)$.
\end{definition}

\begin{definition}[Analytical bound]
If $\Aset_{j|i}(c_i)$ is bounded, define
\begin{equation}
L_{j|i}(c_i)=\inf \Aset_{j|i}(c_i),\qquad
U_{j|i}(c_i)=\sup \Aset_{j|i}(c_i).
\end{equation}
If the attainable set is compact, the extrema are attained and may be written as a minimum and maximum.
\end{definition}

\begin{definition}[Approximate identifiability]
For a metric $d$ on the target space, the source value $c_i$ is $\varepsilon$-identifying for $C_j$ if
\begin{equation}
\sup_{a,b\in\Aset_{j|i}(c_i)} d(a,b)\leq \varepsilon.
\end{equation}
The quantity on the left is the attainable-set diameter.
\end{definition}

\subsection{Partition and factorization view}
Each functional partitions $\Xspace$ into equivalence classes
\begin{equation}
x\sim_i x'\quad\Longleftrightarrow\quad C_i(x)=C_i(x').
\end{equation}
Exact inference from $C_i$ to $C_j$ is possible precisely when every equivalence class induced by $C_i$ is contained within an equivalence class induced by $C_j$.

\begin{proposition}[Fiber criterion for exact inference]
For functionals $C_i$ and $C_j$ on $\Xspace$, the following are equivalent:
\begin{enumerate}[leftmargin=*]
\item $C_j$ is exactly determined by $C_i$ on $\Xspace$;
\item $C_j$ is constant on every fiber of $C_i$;
\item there exists a function $g$ on the image of $C_i$ such that
\begin{equation}
C_j(x)=g(C_i(x)),\qquad \forall x\in\Xspace.
\end{equation}
\end{enumerate}
\end{proposition}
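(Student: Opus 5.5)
The plan is to prove the cycle of implications (1)$\Rightarrow$(2)$\Rightarrow$(3)$\Rightarrow$(1), working directly from the definitions of the fiber $\Fset_i(c_i)$ in \eqref{eq:fiber} and the attainable set $\Aset_{j|i}(c_i)$ in \eqref{eq:attainable}. Throughout, an ``admissible'' value $c_i$ means an element of the image $C_i(\Xspace)$. For such $c_i$ the fiber $\Fset_i(c_i)$ is nonempty, and so $\Aset_{j|i}(c_i)$ is nonempty as well. This nonemptiness is the only point requiring care: a singleton must contain exactly one element, so an empty attainable set for a value outside the image would not count. I will fix this reading of ``admissible'' at the outset.

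For (1)$\Rightarrow$(2), take $c_i\in C_i(\Xspace)$ and $x,x'\in\Fset_i(c_i)$. Both $C_j(x)$ and $C_j(x')$ belong to $\Aset_{j|i}(c_i)$, which is a singleton, so $C_j(x)=C_j(x')$. For (2)$\Rightarrow$(3), define $g$ on $C_i(\Xspace)$ by setting $g(c_i)$ equal to the common value of $C_j$ on $\Fset_i(c_i)$. This is well defined because the fiber is nonempty and $C_j$ is constant on it by (2); no choice axiom is needed, since the value does not depend on the representative. Then for any $x\in\Xspace$ we have $x\in\Fset_i(C_i(x))$, which gives $C_j(x)=g(C_i(x))$.

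For (3)$\Rightarrow$(1), fix $c_i\in C_i(\Xspace)$. Every $x\in\Fset_i(c_i)$ satisfies $C_j(x)=g(C_i(x))=g(c_i)$, so $\Aset_{j|i}(c_i)\subseteq\{g(c_i)\}$. Nonemptiness of the fiber then gives equality, so the attainable set is a singleton.

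As a remark tying this to the partition view stated just before the proposition, condition (2) says exactly that each $\sim_i$-class lies inside a $\sim_j$-class. Condition (3) is the familiar factorization of a map through the quotient $\Xspace/\!\sim_i\cong C_i(\Xspace)$. The argument is purely set-theoretic, and I expect no genuine obstacle beyond the bookkeeping of image versus codomain noted above. I would also note that $g$ carries no regularity, such as continuity or measurability, unless further structure is imposed on $\Xspace$.
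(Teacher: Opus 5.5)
Your proof is correct and follows the same route as the paper's sketch. In both, the key step is defining $g(c)$ as the common value of $C_j$ on the fiber $\Fset_i(c)$, and the reverse direction is immediate; you additionally make the full cycle of implications explicit, along with the nonemptiness of fibers over the image $C_i(\Xspace)$, which the paper leaves implicit in its appeal to the partition formulation.
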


\noindent\textit{Proof sketch:} If (3) holds, all states with the same source value have the same target value. Conversely, if the target is constant on each source fiber, define $g(c)$ to be that common target value. The partition formulation is equivalent.

This criterion is stronger than an empirical association. A regression model $\widehat C_j=g(C_i)$ learned from data can achieve high predictive accuracy while exact determination fails because the conditional variance remains nonzero outside the sampled distribution. Deterministic identifiability is therefore a property of the map and the admissible state space, not of a particular dataset.

\begin{proposition}[Information order]
Define $C_i\succeq C_j$ when there exists a map $g$ such that $C_j=g\circ C_i$ on $\Xspace$. Then $\succeq$ is a preorder on connectivity functionals. If both $C_i\succeq C_j$ and $C_j\succeq C_i$, the two functionals induce the same partition of $\Xspace$ up to an invertible reparameterization on their images.
\end{proposition}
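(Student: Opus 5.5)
The plan is to verify the two preorder axioms directly from the definition, and then use the fiber criterion (Proposition 1) to turn mutual dominance into a statement about partitions.

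\emph{Reflexivity.} Take $g$ to be the identity on the image $C_i(\Xspace)$. Then $C_i = g\circ C_i$, so $C_i\succeq C_i$.

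\emph{Transitivity.} Suppose $C_i\succeq C_j$ via $g$ and $C_j\succeq C_k$ via $h$. Then
\[
C_k = h\circ C_j = h\circ g\circ C_i \quad\text{on } \Xspace ,
\]
so $h\circ g$ witnesses $C_i\succeq C_k$. One point needs care: $g$ need only be defined on $C_i(\Xspace)$ and $h$ on $C_j(\Xspace)$. Since $g$ maps $C_i(\Xspace)$ onto $C_j(\Xspace)$, the composite is well defined. No antisymmetry is claimed, so this establishes a preorder.

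\emph{Equal partitions.} Assume $C_j=g\circ C_i$ and $C_i=h\circ C_j$. By Proposition 1, the first identity says every $\sim_i$ class lies inside a $\sim_j$ class. The second says every $\sim_j$ class lies inside a $\sim_i$ class. Concretely, $C_i(x)=C_i(x')$ implies $C_j(x)=C_j(x')$, and conversely. Hence $\sim_i$ and $\sim_j$ coincide as equivalence relations on $\Xspace$.

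\emph{Invertible reparameterization.} Restrict $g$ to $C_i(\Xspace)$ and $h$ to $C_j(\Xspace)$. For any $c=C_i(x)$,
\[
h(g(c)) = h(C_j(x)) = C_i(x) = c ,
\]
so $h\circ g=\mathrm{id}$ on $C_i(\Xspace)$. Symmetrically, $g\circ h=\mathrm{id}$ on $C_j(\Xspace)$. Thus $g$ is a bijection between the images with inverse $h$, and $C_j=g\circ C_i$ is an invertible reparameterization.

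The main obstacle is keeping track of domains. Every statement must be made on the images $C_i(\Xspace)$ and $C_j(\Xspace)$, not on the ambient codomains, because $g$ and $h$ need not be injective or even defined outside the images. Bijectivity holds only between the images. I would also note explicitly that "invertible" here is purely set-theoretic: continuity or measurability of $g^{-1}$ is not claimed without further assumptions on $\Xspace$ and the functionals.
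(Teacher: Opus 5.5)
Your proof is correct and complete. The paper states this proposition without a proof and relies on the preceding partition/factorization discussion and the fiber criterion; your argument (the identity map for reflexivity, $h\circ g$ for transitivity, mutual containment of fibers for $\sim_i=\sim_j$, and $h\circ g=\mathrm{id}$ on $C_i(\Xspace)$ with $g\circ h=\mathrm{id}$ on $C_j(\Xspace)$ for invertibility) is the elementary verification that view implies, and restricting $g,h$ to the images makes precise the paper's informal phrase ``invertible reparameterization on their images.''
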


This partial ordering is useful because it reframes ``equivalence'' as information preservation. A measure need not be numerically identical to another measure to be at least as informative for a particular inferential target.

\subsection{Non-identifiability by constructive counterexample}
To prove non-identifiability, it is sufficient to construct two admissible states $x_1,x_2$ satisfying
\begin{equation}
C_i(x_1)=C_i(x_2),\qquad C_j(x_1)\neq C_j(x_2).
\end{equation}
This approach is particularly powerful for phase and information-theoretic measures because simple discrete phase distributions or mixture models can often realize the required equal-source/different-target pair exactly.

\subsection{Bounds and optimization}
When exact determination fails but the admissible state space contains explicit constraints, cross-measure inference becomes an optimization problem:
\begin{align}
L_{j|i}(c)&=\inf_{x\in\Xspace} && C_j(x)\\
\text{s.t.}&&& C_i(x)=c,
\label{eq:boundoptL}
\end{align}
and analogously for $U_{j|i}$. Closed forms are available for simple parameterizations. In more complex settings, numerical optimization, convex relaxations, moment methods, or interval analysis may be used, but the mathematical object remains the attainable set itself.

\begin{figure*}[t]
\centering
\begin{tikzpicture}[
    >=Latex,
    font=\small,
    process/.style={
        draw,
        rounded corners=3pt,
        align=center,
        minimum height=0.9cm,
        inner sep=5pt
    },
    info/.style={
        draw,
        align=center,
        inner sep=6pt
    }
]


\node[process, minimum width=2.7cm] (state)
    at (0,0)
    {Admissible EEG state\\$x\in\Xspace$};

\node[process, minimum width=2.9cm] (source)
    at (3.5,0)
    {Source functional\\$c_i=C_i(x)$};

\node[process, minimum width=3.0cm] (fiber)
    at (7.7,0)
    {Source fiber\\$\Fset_i(c_i)$};

\node[process, minimum width=3.0cm] (target)
    at (12.6,0)
    {Target values\\$\Aset_{j|i}(c_i)$};

\draw[->] (state.east) -- (source.west);
\draw[->] (source.east) -- (fiber.west);
\draw[->] (fiber.east) -- (target.west);


\node[
    info,
    minimum width=2.7cm,
    minimum height=0.9cm
] (augment)
    at (3.5,-2.3)
    {Add state information\\$Z=z$};

\node[
    info,
    align=left,
    text width=4.7cm,
    font=\footnotesize
] (classes)
    at (7.7,-2.3)
    {%
    \textbf{Classification of the attainable set}\\[2pt]
    Singleton: exact determination\\
    Bounded, multi-valued: analytical bounds\\
    Small diameter: approximate identification\\
    Unbounded: non-identifiability
    };


\draw[->, dashed]
    (augment.north)
    -- (3.5,-0.95)
    -- (7.0,-0.95)
    -- (7.0,-0.45);

\draw[->, dashed]
    (fiber.south)
    -- (classes.north);

\end{tikzpicture}

\caption{Cross-connectivity inference formulated as an attainable-set
problem. A reported source-measure value identifies a fiber of admissible
states rather than a unique state. The target measure is exactly
determined only when it is constant over that fiber. Additional state
information can restrict the fiber and enable conditional determination
or tighter bounds.}

\label{fig:attainable_geometry}
\end{figure*}

\subsection{Set-valued inverse inference}
The attainable-set formulation can be viewed as an inverse problem with a set-valued solution. Define the observation map $\Phi_i:\Xspace\rightarrow\mathcal C_i$ by $\Phi_i(x)=C_i(x)$. The inverse image of an observed value $c_i$ is the fiber $\Phi_i^{-1}(c_i)$. The target map $C_j$ applied to this inverse image produces the smallest exact set of target values compatible with the observation:
\begin{equation}
\Aset_{j|i}(c_i)=C_j\left(\Phi_i^{-1}(c_i)\right).
\end{equation}
This perspective is useful because it places no requirement on the inverse image to contain a unique state. The source statistic is allowed to be many-to-one, and the mathematical task is then to characterize the target image of the resulting fiber.

\begin{proposition}[Monotonicity under additional information]
Let $Z$ be an augmentation of the source observation. For any compatible $(c_i,z)$,
\begin{equation}
\Aset_{j|i,z}(c_i,z)\subseteq \Aset_{j|i}(c_i).
\end{equation}
Hence additional state information can never enlarge the attainable target set when it is conditioned on consistently.
\end{proposition}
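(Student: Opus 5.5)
The plan is a direct set-inclusion argument that works at the level of fibers, since both attainable sets are images under $C_j$. First I will write the augmented attainable set in fiber form. By the definition of conditional determination, $\Aset_{j|i,z}(c_i,z)=C_j\bigl(\Fset_i(c_i)\cap Z^{-1}(z)\bigr)$, where $Z^{-1}(z)=\{x\in\Xspace: Z(x)=z\}$. By \eqref{eq:attainable}, or equivalently by the set-valued inverse formulation $\Aset_{j|i}(c_i)=C_j(\Phi_i^{-1}(c_i))$, the unaugmented set is $C_j(\Fset_i(c_i))$.

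The key step is the elementary fact that images preserve inclusion. Since $\Fset_i(c_i)\cap Z^{-1}(z)\subseteq \Fset_i(c_i)$, we get $C_j(A)\subseteq C_j(B)$ whenever $A\subseteq B$. Concretely, take any $t\in\Aset_{j|i,z}(c_i,z)$. Then there is an $x\in\Xspace$ with $C_i(x)=c_i$, $Z(x)=z$ and $C_j(x)=t$. Dropping the constraint $Z(x)=z$ shows that $x\in\Fset_i(c_i)$, and hence $t\in\Aset_{j|i}(c_i)$.

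The only point needing care is the phrase ``compatible $(c_i,z)$'' and the accompanying remark about conditioning consistently. Here I will read compatibility as requiring $\Fset_i(c_i)\cap Z^{-1}(z)\neq\emptyset$. If the intersection were empty, the inclusion would still hold trivially, since the empty set is contained in every set. Compatibility therefore only matters for the augmented set to be non-vacuous, so I will note this case explicitly rather than exclude it.

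I will also stress that $Z$ must be a function on the same state space $\Xspace$, and that $\Xspace$ itself is not altered. If augmentation instead changed the admissible state space, for instance by relaxing assumptions, the inclusion could fail. That is the only real obstacle: it is definitional, not technical.

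Finally, I will record the immediate consequences for the quantities defined earlier. When the sets are bounded, $L_{j|i}(c_i)\le \inf\Aset_{j|i,z}(c_i,z)$ and $\sup\Aset_{j|i,z}(c_i,z)\le U_{j|i}(c_i)$. The attainable-set diameter is also non-increasing, so $\varepsilon$-identification by $c_i$ implies $\varepsilon$-identification by $(c_i,z)$. Exact determination by $C_i$ likewise implies conditional determination given any compatible $Z$.
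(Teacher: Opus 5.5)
Your proposal is correct and matches the paper's own argument: the augmented fiber $\Fset_i(c_i)\cap Z^{-1}(z)$ is contained in $\Fset_i(c_i)$, and taking images under $C_j$ preserves inclusion. Your extra remarks are correct refinements that the statement does not require: the empty-intersection case, the need to keep $\Xspace$ fixed, and the consequences for $L_{j|i}$, $U_{j|i}$, the diameter, and exact-to-conditional determination.
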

\noindent\textit{Proof:} the augmented fiber is a subset of the unaugmented source fiber, and applying $C_j$ to a subset cannot produce a larger image. The result is elementary but important experimentally: measuring a quantity that reduces hidden degrees of freedom can turn a non-identifiable conversion into an exact or bounded one without changing the target estimator.

\begin{corollary}[Minimal sufficient augmentation]
If there exists a finite-dimensional augmentation $Z$ such that $C_j$ is constant on every joint fiber
\begin{equation}
\{x\in\Xspace:C_i(x)=c_i,\ Z(x)=z\},
\end{equation}
then $Z$ is sufficient for conditional determination of $C_j$ from $C_i$. Different sufficient augmentations need not be unique; the framework therefore supports an experimental-design interpretation in which the objective is to find the smallest measurable $Z$ that collapses the relevant fiber.
\end{corollary}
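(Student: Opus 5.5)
The plan is to derive the corollary directly from the definition of conditional determination, with the fiber criterion applied to the augmented pair. Fix an admissible pair $(c_i,z)$, meaning the joint fiber
\begin{equation}
\Fset_{i,Z}(c_i,z)=\{x\in\Xspace:C_i(x)=c_i,\ Z(x)=z\}
\end{equation}
is nonempty. By definition, $\Aset_{j|i,z}(c_i,z)=C_j\bigl(\Fset_{i,Z}(c_i,z)\bigr)$. The hypothesis says $C_j$ is constant on this fiber, so its image under $C_j$ is a nonempty set with exactly one element, i.e.\ a singleton. Since $(c_i,z)$ was arbitrary, this is precisely the definition of conditional determination of $C_j$ by $C_i$ given $Z$.

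To connect this with the factorization language, I would apply the Fiber criterion for exact inference to the joint functional $x\mapsto (C_i(x),Z(x))$ in place of $C_i$. Its fibers are exactly the joint fibers above. Constancy of $C_j$ on them is therefore equivalent to the existence of a map $g$ on the joint image with $C_j(x)=g(C_i(x),Z(x))$ for all $x\in\Xspace$. In the information order this reads $(C_i,Z)\succeq C_j$. The Monotonicity proposition places this within the attainable-set picture: each augmented set is a subset of $\Aset_{j|i}(c_i)$, and sufficiency is the extreme case where every such subset has collapsed to a point.

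For the non-uniqueness remark, I would argue in two steps. First, any refinement $Z'$ of $Z$ is again sufficient, for example $Z'=(Z,W)$ for an arbitrary additional statistic $W$, or $Z'=h(Z)$ with $h$ injective. The reason is that its joint fibers are contained in those of $Z$, and constancy is inherited by subsets. Second, if a genuinely different, non-nested example is wanted, take $Z=C_j$ itself: its joint fibers trivially have constant $C_j$. Hence $C_j$ and any sufficient $Z$ that is not a function of $C_j$ give two distinct augmentations.

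The finite-dimensionality assumption plays no role in the logic; I would note that it only matters for the experimental-design reading. The only real subtlety is bookkeeping. Empty joint fibers must be excluded through the word ``admissible'', and the definition of conditional determination must be stated over compatible pairs only. Once that is fixed, there is no genuine obstacle: the corollary is a restatement of the fiber criterion for an augmented observation map. The ``smallest $Z$'' is an interpretive objective rather than a claim to be proved.
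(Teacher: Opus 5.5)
Your proposal is correct and takes essentially the paper's route. The paper gives no separate proof, because the corollary is just the definition of conditional determination unwound on the joint fibers of $(C_i,Z)$; equivalently it is the fiber criterion applied to the augmented map, and the non-uniqueness and ``smallest $Z$'' clauses are interpretive remarks, as you say. Your observation that $Z=C_j$ is always sufficient is worth keeping, since it shows the ``smallest measurable $Z$'' objective has content only under a measurability or cost constraint. It is not really non-nested, though: any sufficient $Z$ gives $C_j=g(C_i,Z)$, so the partition induced by $(C_i,Z)$ always refines the one induced by $(C_i,C_j)$.
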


\subsection{Bounds as scientifically meaningful outputs}
A point conversion can be misleading when the target attainable set has non-negligible width. Define the normalized width for scalar measures by
\begin{equation}
W_{j|i}(c_i)=\frac{U_{j|i}(c_i)-L_{j|i}(c_i)}{1+|L_{j|i}(c_i)|+|U_{j|i}(c_i)|}.
\end{equation}
A small width indicates approximate identifiability on the selected state space; a large width indicates that the source observation leaves substantial target ambiguity. Unlike an $R^2$ from a regression, $W_{j|i}$ is a property of the admissible mapping rather than the frequency with which particular states happened to be sampled.

For matrix-valued measures, scalar lower and upper bounds are insufficient. Let $\mathbf C_i$ and $\mathbf C_j$ lie in normed spaces. The attainable set becomes
\begin{equation}
\Aset_{j|i}(\mathbf c_i)=\{\mathbf C_j(x):x\in\Xspace,\ \mathbf C_i(x)=\mathbf c_i\},
\end{equation}
and the natural uncertainty object is a region in matrix space. A useful family of summaries includes the diameter under Frobenius norm, elementwise envelopes, spectral-norm envelopes, or bounds on graph functionals $G(\mathbf C_j)$.

\begin{equation}
\operatorname{diam}_F(\Aset)=\sup_{A,B\in\Aset}\|A-B\|_F.
\end{equation}
The choice of metric should follow the scientific question rather than convenience: entrywise uncertainty is appropriate for edgewise inference, whereas spectral uncertainty is more relevant to stability or low-rank structure.

\subsection{A theorem on factorization through a summary statistic}
The fiber criterion can be expressed in a form familiar from sufficient-statistic theory. Let $S:\Xspace\rightarrow\mathcal S$ be any summary statistic and $T:\Xspace\rightarrow\mathcal T$ be a target functional. Then $T$ factors through $S$ if and only if there exists $g:\mathcal S\rightarrow\mathcal T$ with $T=g\circ S$. In the connectivity setting, the source connectivity value itself is the summary. The criterion therefore does not depend on the semantics of EEG; it follows from a general property of mappings.

The connection with statistical sufficiency is informative but should not be overstated. A source connectivity measure may be sufficient for a particular target connectivity functional without being sufficient for the underlying signal distribution, experimental parameter, or causal mechanism. The notion used here is therefore \emph{target-specific sufficiency}: sufficiency only for the requested cross-measure inference.

\subsection{Directionality of identifiability}
The notation $\Aset_{j|i}$ is intentionally directional. It is entirely possible that $C_j$ determines $C_i$ while $C_i$ does not determine $C_j$. A simple analogy is a many-to-one map $T=x^2$: knowing $T$ does not recover the sign of $x$, whereas knowing $x$ trivially determines $T$. In EEG, a target that retains more phase, amplitude, conditional, or multivariate structure can determine a compressed source statistic even when the reverse mapping is impossible. The symbols used in these definitions are collected in Table~\ref{tab:notation}.

\begin{table*}[t]
\caption{Mathematical objects used throughout the framework.}
\label{tab:notation}
\centering
\scriptsize
\begin{tabularx}{\textwidth}{p{2.5cm} p{3.1cm} X}
\toprule
Object & Definition & Interpretation in an EEG connectivity study \\
\midrule
$\Xspace$ & admissible state space & signal laws, nuisance parameters, processing choices, or constrained trajectories treated as possible \\
$C_i$ & source connectivity functional & reported connectivity statistic or matrix \\
$\Fset_i(c)$ & $\{x:C_i(x)=c\}$ & all admissible states producing the same reported source value \\
$\Aset_{j|i}(c)$ & $\{C_j(x):x\in\Fset_i(c)\}$ & all target values compatible with the source observation \\
$L,U$ & infimum/supremum of attainable target set & sharp scalar bounds when the target is not exactly identifiable \\
$\operatorname{diam}(\Aset)$ & maximum target-space distance inside the attainable set & measure of approximate identifiability \\
$Z$ & additional observed/assumed state information & e.g., variances, phase distribution, model order, reference, filter, or source geometry \\
$\widehat C_i$ & finite-sample estimator & actual quantity reported from an EEG recording \\
$\mathcal E_i$ & estimator error/uncertainty model & sampling variability, bias, or implementation uncertainty \\
\bottomrule
\end{tabularx}
\end{table*}

\begin{figure*}[t]
\centering
\begin{tikzpicture}[
    >=Latex,
    process/.style={
        draw,
        rounded corners=3pt,
        align=center,
        text width=2.55cm,
        minimum height=1.0cm,
        inner sep=4pt,
        font=\small
    },
    result/.style={
        draw,
        rounded corners=3pt,
        align=center,
        text width=2.4cm,
        minimum height=1.0cm,
        inner sep=4pt,
        font=\small
    },
    note/.style={
        draw,
        align=center,
        text width=2.65cm,
        inner sep=5pt,
        font=\scriptsize
    },
    classification/.style={
        draw,
        align=left,
        text width=3.1cm,
        inner sep=5pt,
        font=\scriptsize
    }
]


\node[process] (latent) at (0,0)
    {Latent neural state\\$\mathbf{s}(t)$};

\node[process] (measure) at (3.35,0)
    {Measurement and processing\\
     $\mathbf{y}=\mathcal{P}(\mathbf{s})$};

\node[process] (functional) at (6.70,0)
    {Connectivity functional\\
     $C_i=F_i(\mathbf{y})$};

\node[process] (fiber) at (10.05,0)
    {Inverse set\\$\Fset_i(c_i)$};

\draw[->] (latent.east) -- (measure.west);
\draw[->] (measure.east) -- (functional.west);
\draw[->] (functional.east) -- (fiber.west);


\node[note] (pipeline) at (3.35,-2.35)
    {\textbf{Pipeline factors}\\
     Reference, filtering, source model,\\
     window and artifact processing};

\node[classification] (classes) at (6.85,-2.35)
    {%
    \textbf{Inference outcome}\\[2pt]
    Singleton: exact determination\\
    Bounded, multi-valued: analytical bounds\\
    Small diameter: approximate inference\\
    Unbounded: non-identifiability
    };

\node[result] (target) at (10.05,-2.35)
    {Target attainable set\\
     $\Aset_{j|i}(c_i)$};

\draw[->] (fiber.south) -- (target.north);

\draw[->] (target.west) -- (classes.east);

\draw[->,dashed] (measure.south) -- (pipeline.north);

\draw[->,dashed] (pipeline.east) -- (classes.west);

\end{tikzpicture}

\caption{Expanded measurement-to-inference chain for EEG functional
connectivity. The connectivity functional operates on measured and
processed signals rather than directly on the latent neural state.
Consequently, the admissible state space and the resulting attainable
set depend on the measurement and processing pipeline, including
referencing, filtering, source reconstruction, epoch selection and
artifact handling. The attainable set determines whether cross-measure
inference is exact, bounded or non-identifiable under the specified
assumptions.}

\label{fig:measurement_chain}
\end{figure*}

\section{What EEG Connectivity Measures Preserve and Discard}
\label{sec:taxonomy}
A useful way to compare connectivity estimators is to ask what summary object they retain. The same underlying signals can therefore be represented by different ``compressed'' objects, and cross-measure identifiability depends on whether one retained object is sufficient to reconstruct another. Table~\ref{tab:retention} makes this loss-of-information view explicit, while Table~\ref{tab:taxonomy2} organizes the same families by inferential object.

\begin{table*}[t]
\caption{Representative information retained and discarded by common connectivity families.}
\label{tab:retention}
\centering
\scriptsize
\begin{tabularx}{\textwidth}{p{2.4cm} p{3.2cm} X X}
\toprule
Measure family & Mathematical summary & Retained & Discarded or marginalized \\
\midrule
Covariance & $\E[(X-\mu_X)(Y-\mu_Y)]$ & signed second-order co-fluctuation & marginal scales, higher-order dependence, phase structure \\
Correlation & covariance normalized by $\sigma_X\sigma_Y$ & scale-free linear dependence & absolute amplitude scale, nonlinear structure \\
Coherency & $S_{xy}(f)/\sqrt{S_{xx}(f)S_{yy}(f)}$ & frequency-specific linear dependence and phase & full amplitude distribution, nonstationary structure outside the estimator \\
PLV & $|\E[e^{i\Delta\phi}]|$ & first circular moment magnitude & higher circular moments, amplitude weighting \\
PLI & $|\E[\operatorname{sign}(\sin\Delta\phi)]|$ & imbalance of phase-lead/lag signs & lag magnitude and fine angular structure \\
wPLI & $|\E[\Im(Z)|/\cdot\,\operatorname{sign}(\Im Z)]|$ & magnitude-weighted sign balance & unweighted sign balance and full amplitude/phase distribution \\
MI & $D_{KL}(p_{XY}\Vert p_Xp_Y)$ & general statistical dependence & temporal ordering unless extended to histories \\
GC & reduction in prediction-error variance & directional linear predictability under a model class & nonlinear dependence outside model, hidden variables not modeled \\
TE & conditional mutual information between present and past & directional information transfer in the chosen embedding & aspects outside selected histories and estimator support \\
Graph metric & $G(\mathbf W)$ & one network summary & most individual edge information \\
\bottomrule
\end{tabularx}
\end{table*}

The table shows why a numerical similarity between two measures is not sufficient to establish mathematical equivalence. Exact conversion requires one summary to preserve every feature used by the other.

\begin{table*}[t]
\caption{Taxonomy of EEG connectivity measures by inferential object rather than method name.}
\label{tab:taxonomy2}
\centering
\scriptsize
\begin{tabularx}{\textwidth}{p{2.0cm} p{2.0cm} p{2.0cm} p{3.0cm} X}
\toprule
Family & Pairwise/multivariate & Directedness & Typical state features emphasized & Representative literature \\
\midrule
Correlation/covariance & pairwise, extendable & undirected & time-domain second-order dependence & \cite{Friston1994,Nunez1997} \\
Coherence/coherency & pairwise/multivariate variants & undirected & frequency-specific second-order dependence & \cite{Nunez1997,Nolte2004} \\
Phase synchronization & pairwise & undirected & relative phase concentration or lag sign & \cite{Lachaux1999,Stam2007,Vinck2010,Vinck2011} \\
Envelope connectivity & pairwise/multivariate & usually undirected & amplitude co-fluctuation within bands & \cite{Brookes2012,SanchezBornot2018} \\
Information-theoretic & pairwise/multivariate & can be directional & full or conditional distributions & \cite{Schreiber2000,Kraskov2004,Vicente2011} \\
Autoregressive directionality & multivariate & directed & temporal prediction and transfer structure & \cite{Granger1969,KaminskiBlinowska2001,BarnettSeth2014} \\
Cross-frequency coupling & pairwise/multivariate & depends on functional & joint phase/amplitude structure across bands & \cite{Canolty2006,Tort2010,Aru2015,Hyafil2017} \\
Graph/network summaries & multivariate & directed or undirected & topology, community, integration, segregation & \cite{RubinovSporns2010,BassettSporns2017,Sporns2018} \\
\bottomrule
\end{tabularx}
\end{table*}

\section{Linear and Spectral Relationships}
\label{sec:linear}

\subsection{Covariance and correlation}
For random variables $X$ and $Y$ with finite nonzero variances,
\begin{equation}
r=\frac{\Cov(X,Y)}{\sqrt{\Var(X)\Var(Y)}}.
\label{eq:corr_cov}
\end{equation}
Hence covariance determines correlation when the two marginal standard deviations are available. Covariance alone does not. For any positive $a,b$, the transformed variables $X'=aX$ and $Y'=bY$ satisfy
\begin{equation}
\Cov(X',Y')=ab\Cov(X,Y),\qquad \corr(X',Y')=\corr(X,Y).
\end{equation}
Therefore a fixed correlation value can coexist with arbitrarily large covariance magnitude unless scale constraints are imposed.

\begin{proposition}[Conditional exactness of covariance-to-correlation conversion]
On an admissible state space with known standard deviations $(\sigma_X,\sigma_Y)$, correlation is exactly determined from covariance by $r=c/(\sigma_X\sigma_Y)$. On a state space in which the variances are free and unbounded, covariance-to-correlation conversion is not injective and correlation-to-covariance is unbounded.
\end{proposition}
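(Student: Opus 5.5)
The plan is to treat the two claims separately. For the first, I will apply the fiber criterion (the proposition on exact inference) together with conditional determination. For the second, I will build explicit constructive counterexamples of the type described in the subsection on non-identifiability.

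\emph{Step 1 (conditional exactness).} Take $\Xspace$ to be the set of joint laws of $(X,Y)$ with finite variances fixed at $\sigma_X^2,\sigma_Y^2>0$; equivalently, regard $Z=(\sigma_X,\sigma_Y)$ as an augmentation of the covariance observation. On the fiber $\{x:\Cov(X,Y)=c,\ Z(x)=(\sigma_X,\sigma_Y)\}$, every state has $\corr(X,Y)=c/(\sigma_X\sigma_Y)$ by \eqref{eq:corr_cov}. So correlation is constant on each joint fiber, and the fiber criterion gives $g(c)=c/(\sigma_X\sigma_Y)$. The attainable set $\Aset_{j|i,z}(c,z)$ is therefore a singleton, which is exactly conditional determination. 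The admissible covariance values are $|c|\le\sigma_X\sigma_Y$ by Cauchy--Schwarz, so $g$ is well defined on the image.

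\emph{Step 2 (covariance does not determine correlation when variances are free).} Fix any admissible covariance value $c\neq 0$. Take a jointly Gaussian pair with unit variances and correlation $\rho$, where $0<|\rho|\le 1$ has the sign of $c$. Rescale it to $X'=aX$, $Y'=bY$ with $ab=c/\rho$. Then $\Cov(X',Y')=c$ while $\corr(X',Y')=\rho$. Since $\rho$ ranges over a whole interval of values with the sign of $c$, two such states with distinct $\rho$ give equal source and different target values. For $c=0$, we can use instead a pair that attains correlation $0$, and the claim there only needs nonconstancy on some fiber. In either case the attainable set $\Aset_{r|\mathrm{cov}}(c)$ is not a singleton. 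Exact determination therefore fails, which is the intended meaning of ``not injective'' (the covariance-to-correlation relation is not a function).

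\emph{Step 3 (correlation to covariance is unbounded).} Fix $r\neq0$ and any base pair with that correlation. The scaling identity $\Cov(aX,bY)=ab\Cov(X,Y)$ with $\corr$ unchanged for $a,b>0$ gives $\Aset_{\mathrm{cov}|r}(r)\supseteq\{ab\,\Cov(X,Y):a,b>0\}$. This set is the whole open half-line with the sign of $r$, so $U$ or $L$ is infinite. For $r=0$ the covariance is identically zero, so that case has to be stated as an exception; this is the one point where the statement must be read carefully.

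I expect the main obstacle to be stating precisely what ``not injective'' means and handling these degenerate values. The substantive content is only the scaling construction, and the rest is bookkeeping against the fiber criterion.
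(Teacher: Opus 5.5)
Your proof is correct and takes the paper's route: the defining formula $r=\Cov(X,Y)/(\sigma_X\sigma_Y)$ for the conditional part, and the rescaling $X\mapsto aX$, $Y\mapsto bY$ with $a,b>0$ for the unrestricted part. Your explicit equal-covariance/different-correlation construction and your $r=0$ exception (covariance is then forced to be $0$, so the paper's ``unbounded'' claim needs that caveat) are sharper than the paper's text. One small slip: for $c=0$ the fiber gives $\Aset_{r|\mathrm{cov}}(0)=\{0\}$, which is a singleton, so ``in either case'' should read ``for every $c\neq 0$''. This still suffices, because exact determination fails as soon as one fiber is non-singleton.
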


\subsection{Correlation and Gaussian mutual information}
For a jointly Gaussian pair with correlation coefficient $\rho$,
\begin{equation}
\MI(X;Y)=-\frac{1}{2}\log(1-\rho^2).
\label{eq:gaussian_mi}
\end{equation}
The result follows from the Gaussian entropy formula and the determinant of the covariance matrix. It is an exact example of conditional determination: the target mutual information is a deterministic function of the source correlation only after joint Gaussianity is imposed \cite{Kraskov2004,Paninski2003}. The resulting population relationship is plotted in Fig.~\ref{fig:gaussian_mi}.

\begin{proposition}[Gaussian restriction is essential]
Over the class of all continuous bivariate distributions with fixed Pearson correlation $\rho$, mutual information is generally not uniquely determined by $\rho$. Consequently, \eqref{eq:gaussian_mi} is a model-conditional identity, not a universal correlation-to-MI conversion.
\end{proposition}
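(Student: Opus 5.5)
By the non-identifiability principle of Section~III, it suffices to exhibit two admissible continuous bivariate laws $P_1,P_2$ with the same Pearson correlation $\rho$ and $\MI_{P_1}(X;Y)\neq\MI_{P_2}(X;Y)$. The second conclusion of the statement then follows directly from the fiber criterion (Proposition~1). Equation \eqref{eq:gaussian_mi} defines a function $g(\rho)$ with $\MI=g(\corr)$ on the Gaussian family. No such $g$ can exist on the larger class, because $\MI$ is not constant on the fiber $\{\corr=\rho\}$. I would give the construction first for $\rho=0$, where it is cleanest, and then indicate how to reach arbitrary $|\rho|<1$.

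For $\rho=0$, take $P_1$ to be the standard bivariate Gaussian with independent coordinates, so $\MI=0$. For $P_2$, take $X\sim N(0,1)$ and $Y=\epsilon X$, where $\epsilon$ is an independent Rademacher sign. This law is singular, so I would smooth it to keep it in the continuous class: set $Y=\epsilon X+\sigma W$ with $W\sim N(0,1)$ independent. Then $\Cov(X,Y)=\E[\epsilon]\E[X^2]=0$, so the correlation is zero. On the other hand $|Y|$ is strongly dependent on $|X|$. One way to see $\MI>0$ without computing it is to note that $X$ and $Y$ are not independent, because $\E[X^2Y^2]=3+\sigma^2\neq \E[X^2]\E[Y^2]=1+\sigma^2$. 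Mutual information is zero only under independence, so $\MI_{P_2}>0=\MI_{P_1}$.

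For general $\rho\in(-1,1)$, the simplest route is a mixture or copula construction. Keep Gaussian marginals but use a non-Gaussian dependence structure. For example, take a mixture with weights $1/2$ of the two bivariate Gaussians with correlations $\rho_1$ and $\rho_2$, and equal unit variances. Its Pearson correlation is $(\rho_1+\rho_2)/2$, so any target $\rho$ is reached by choosing $\rho_1=\rho+\delta$ and $\rho_2=\rho-\delta$ with $\delta$ small enough that both lie in $(-1,1)$. I would then show that the mixture's $\MI$ differs from $-\tfrac12\log(1-\rho^2)$ for some $\delta$. A cleaner alternative avoids any explicit MI computation. The fiber $\{\corr=\rho\}$ contains the Gaussian and also, by the $\rho=0$ construction above, laws that are independent-plus-signflip mixtures. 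Mixing any $\rho$-Gaussian with the dependent but uncorrelated $P_2$ scales the correlation in a known way and can be rescaled back to $\rho$. Alternatively, and perhaps most simply, note that $\MI$ is invariant under strictly monotone transformations of the margins, while Pearson correlation is not. Applying $X\mapsto X^3$ (say) to a Gaussian with correlation $\rho'$ changes Pearson $r$ to some value $h(\rho')$ while leaving $\MI=g(\rho')$ unchanged. Matching $h(\rho')=\rho$ with $\rho'\neq\rho$ then gives a law in the fiber whose MI is $g(\rho')\neq g(\rho)$, since $g$ is strictly increasing in $|\rho|$.

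The main obstacle is in the monotone-transform route. One must verify that $h$ is continuous and injective, with $|h(\rho')|<|\rho'|$ for $\rho'\neq0$; for the cube this holds because $\Cov(X^3,Y)=3\rho'$ and $\Var(X^3)=15$, giving $h(\rho')=3\rho'/\sqrt{15}$. Then, for each target $|\rho|<\sqrt{15}/5$, we may choose $\rho'=\sqrt{15}\,\rho/3$, which satisfies $|\rho'|<1$. For larger $|\rho|$, a milder transform such as $X\mapsto X+\eta X^3$ with small $\eta$ covers the remaining range by continuity. This is the step I would check most carefully. Everything else is an immediate application of the constructive-counterexample principle and Proposition~1.
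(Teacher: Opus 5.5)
Your proof is correct. The smoothed sign-flip law settles $\rho=0$, and the monotone-transform route settles every $\rho\neq0$ without computing any $\MI$. The step you flagged is routine. For $U=X+\eta X^{3}$ with $\eta>0$ one gets $\corr(U,Y)=k(\eta)\rho'$, where $k(\eta)=(1+3\eta)/\sqrt{1+6\eta+15\eta^{2}}$ decreases from $1$ to $\sqrt{3/5}$. So for $0<|\rho|<1$ you pick $\eta$ with $k(\eta)>|\rho|$ and set $\rho'=\rho/k(\eta)$, which gives $|\rho|<|\rho'|<1$; the cube is the limiting case. Drop the two mixture sketches. Neither compares the resulting $\MI$ with $-\tfrac12\log(1-\rho^{2})$, and since Pearson correlation is scale-invariant, nothing can be ``rescaled back''.

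The paper does not carry out a proof. It only suggests contrasting Gaussian-copula laws with nonlinear deterministic or near-deterministic couplings tuned to the same Pearson correlation, so that nonlinear dependence drives $\MI$ up. Your mechanism runs the other way. For $\rho\neq0$ both of your laws have Gaussian copulas, and the discrepancy arises because $\MI$ is a copula functional (invariant under strictly monotone marginal maps) while Pearson's $r$ depends on the margins.

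What your route buys:
\begin{itemize}
\item an explicit, verified counterexample for every $\rho\in(-1,1)$;
\item a sharper moral: non-Gaussian margins alone already break \eqref{eq:gaussian_mi}, so what is essential is joint Gaussianity, not merely a Gaussian dependence structure.
\end{itemize}

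What the paper's near-deterministic route buys, once made rigorous, is an unbounded attainable set: arbitrarily large $\MI$ at fixed $\rho$. Your polynomial family keeps $|\rho'|<|\rho|\sqrt{5/3}$, so for small $|\rho|$ it shows only non-uniqueness. You can recover unboundedness within your own framework using a heavier-tailed monotone map. Take $(e^{sX},e^{sY})$ with $(X,Y)$ standard Gaussian of correlation $\rho'$. Its Pearson correlation is $(e^{\rho' s^{2}}-1)/(e^{s^{2}}-1)$, which moves continuously from $\rho'$ to $0$ as $s$ runs over $(0,\infty)$. Hence every $\rho'$ of the same sign as $\rho$ with $|\rho|<|\rho'|<1$ is realized at correlation $\rho$, and $\MI$ sweeps all of $(-\tfrac12\log(1-\rho^{2}),\infty)$.
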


A constructive way to see the issue is to consider families formed by Gaussian copulas versus nonlinear deterministic or near-deterministic couplings adjusted to preserve the same Pearson correlation. The nonlinear structure can change mutual information substantially while leaving the second-order coefficient unchanged. This is the canonical distinction between a universal identity and a model-restricted identity.

\begin{figure}[!t]
\centering
\begin{tikzpicture}
\begin{axis}[
    width=0.95\columnwidth,
    height=0.62\columnwidth,
    xlabel={$|\rho|$},
    ylabel={$I(X;Y)$ for jointly Gaussian $(X,Y)$},
    xmin=0,xmax=0.99,
    ymin=0,ymax=2.1,
    grid=major,
    samples=120]
\addplot[thick] {-0.5*ln(1-x^2)};
\end{axis}
\end{tikzpicture}
\caption{Exact Gaussian relationship between absolute correlation and mutual information. The curve is a population identity within the jointly Gaussian family; it is not a universal conversion for arbitrary EEG amplitude distributions.}
\label{fig:gaussian_mi}
\end{figure}

\subsection{Partial correlation and marginal correlation}
Partial correlation conditions on the remaining variables in a multivariate Gaussian model. For three variables,
\begin{equation}
r_{XY\cdot Z}=\frac{r_{XY}-r_{XZ}r_{YZ}}{\sqrt{(1-r_{XZ}^2)(1-r_{YZ}^2)}}.
\end{equation}
Thus marginal correlation alone cannot generally determine partial correlation because the cross-associations with the conditioning variable are discarded. The attainable-set view treats the omitted correlations as state variables in the optimization defining $\Aset_{\mathrm{partial}\,r|r_{XY}}$.

\subsection{Coherency and coherence}
The cross-spectrum $S_{xy}(f)$ and auto-spectra $S_{xx}(f),S_{yy}(f)$ define complex coherency
\begin{equation}
\gamma_{xy}(f)=\frac{S_{xy}(f)}{\sqrt{S_{xx}(f)S_{yy}(f)}},
\end{equation}
with coherence $|\gamma_{xy}(f)|^2$. Classical EEG coherency work emphasizes the effects of reference, volume conduction, and spatial scale on the observed spectrum \cite{Nunez1997,Nunez1997}. Spectral estimation itself is not neutral: the choice of window, taper, and smoothing alters the estimated cross-spectrum \cite{Welch1967,Thomson1982,Bruns2004}.

\subsection{Coherence, PLV, and the role of amplitude weighting}
Consider analytic signals
\begin{equation}
z_x(t)=A_x(t)e^{i\phi_x(t)},\qquad z_y(t)=A_y(t)e^{i\phi_y(t)}.
\end{equation}
A normalized cross-spectrum involves amplitude-weighted averages of $A_xA_y e^{i(\phi_x-\phi_y)}$, whereas PLV is
\begin{equation}
\PLV=\left|\E\left[e^{i\Delta\phi}\right]\right|,
\qquad \Delta\phi=\phi_x-\phi_y.
\end{equation}
If $A_x$ and $A_y$ are constant over the averaging interval and the signals are narrow-band under a coherent representation, the amplitude weights factor out and the magnitude of coherency reduces to PLV. This relationship has been analyzed explicitly in theoretical work \cite{LepageVijayan2017}.

\begin{proposition}[Constant-amplitude equivalence]
Under a narrow-band representation with constant amplitudes over the averaging interval, $|\gamma_{xy}|=\PLV$. If amplitude weights are variable, the equality need not hold.
\end{proposition}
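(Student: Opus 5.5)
We first fix the population form of coherency that the statement refers to. Under the narrow-band analytic-signal representation, the cross-spectrum at the band of interest is $S_{xy}=\E[z_x\overline{z_y}]$, and the auto-spectra are $S_{xx}=\E[|z_x|^2]$ and $S_{yy}=\E[|z_y|^2]$. Here $\E$ denotes the average over the averaging interval: a time average, a trial average, or an expectation, whichever the estimator uses. Then
\begin{equation}
\gamma_{xy}=\frac{\E\left[A_xA_y e^{i\Delta\phi}\right]}{\sqrt{\E[A_x^2]\,\E[A_y^2]}}.
\end{equation}
Stating this representation explicitly is the only delicate point. The equality is a statement about this amplitude-weighted normalized cross-spectrum, which is the ``coherent representation'' assumed in the text. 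It is not a statement about an arbitrary Welch or multitaper estimate, whose taper weights would act as additional amplitude weights.

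\emph{Equality under constant amplitudes.} Suppose $A_x(t)\equiv a>0$ and $A_y(t)\equiv b>0$ over the averaging interval. The constants factor out of both numerator and denominator:
\begin{equation}
\gamma_{xy}=\frac{ab\,\E[e^{i\Delta\phi}]}{\sqrt{a^2b^2}}=\E[e^{i\Delta\phi}].
\end{equation}
Taking moduli gives $|\gamma_{xy}|=\PLV$. In the language of the fiber criterion, on the admissible state space of constant-amplitude narrow-band states, $|\gamma_{xy}|$ factors through the phase-difference distribution exactly as PLV does, with $g$ the identity. This is the exact determination case.

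\emph{Failure under variable amplitudes.} Here we use the constructive counterexample strategy and exhibit one admissible state with $|\gamma_{xy}|\neq\PLV$. We take a two-point mixture: with probability $1/2$ the state has $\Delta\phi=0$ and $A_xA_y=w_1$, and with probability $1/2$ it has $\Delta\phi=\pi$ and $A_xA_y=w_2$. To keep the normalization simple, we choose $A_x=A_y=\sqrt{w_k}$ in each state.

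\begin{itemize}[leftmargin=*]
\item The phase locking value is $\PLV=|\tfrac12-\tfrac12|=0$.
\item The coherency numerator is $\tfrac12(w_1-w_2)$, and the denominator is $\tfrac12(w_1+w_2)$. Hence $|\gamma_{xy}|=|w_1-w_2|/(w_1+w_2)$.
\end{itemize}

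This is nonzero whenever $w_1\neq w_2$, and it can approach $1$ as $w_2/w_1\to 0$. Consequently, a single PLV value is compatible with a nondegenerate interval of coherence values, and the equality ``need not hold''.

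A cleaner general statement also follows. By Cauchy--Schwarz, $|\E[A_xA_ye^{i\Delta\phi}]|\le\E[A_xA_y]\le\sqrt{\E A_x^2\,\E A_y^2}$, so $|\gamma_{xy}|\le 1$ always. The deviation of $|\gamma_{xy}|$ from PLV is controlled by the covariance between the amplitude product and $e^{i\Delta\phi}$. Thus the mismatch in the counterexample comes entirely from the amplitude weighting, which is what the statement claims.
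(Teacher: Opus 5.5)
Your proof is correct and follows the paper's argument. The paper likewise lets the constant amplitudes factor out of the amplitude-weighted normalized cross-spectrum. For the variable-amplitude part it only asserts (Case~3) that amplitude--phase covariance can change $|\gamma_{xy}|$ at fixed PLV, and your two-point mixture with $\Delta\phi\in\{0,\pi\}$ makes that explicit.

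One small correction to your closing remark, which does not affect the proof. The deviation of $|\gamma_{xy}|$ from $\PLV$ is not controlled by the covariance term alone. Even when $A_xA_y$ is independent of $\Delta\phi$, one has $|\gamma_{xy}|=\PLV\cdot\E[A_xA_y]/\sqrt{\E[A_x^2]\,\E[A_y^2]}$, which is strictly below $\PLV$ whenever $\PLV>0$ and the amplitudes are not proportional. Your example avoids this only because you chose $A_x=A_y$.
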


The implication is important for EEG: a high empirical correlation between coherence and PLV in one dataset does not establish a universal identity. The amplitude and phase distributions constitute hidden state variables whenever the constant-amplitude restriction is not justified.

\subsection{Partial coherence and conditional spectral dependence}
Partial coherence is the frequency-domain analogue of conditioning in multivariate linear models. For a multivariate spectral density matrix $\mathbf S(f)$, conditional spectral quantities can be derived from the inverse spectral matrix. The key identifiability point is immediate: a marginal coherence between $X$ and $Y$ contains no complete information about their conditional coherence after a third or fourth process is controlled. The hidden cross-spectra with the conditioning variables are part of the fiber.

This is the spectral counterpart of the partial-correlation example. It also illustrates why multivariate connectivity should not be described merely as ``more variables in the same measure.'' Conditioning changes the target functional itself.

\subsection{Amplitude-envelope connectivity}
Let $a_x(t)$ and $a_y(t)$ denote analytic-signal envelopes in a specified band. Envelope correlation is often computed from a transformed version of these amplitudes. After an orthogonalization step designed to reduce zero-lag leakage, the observable becomes
\begin{equation}
C_{\mathrm{AEC}}=\corr\left(\mathcal O_x[a_x],\mathcal O_y[a_y]\right),
\end{equation}
where $\mathcal O$ may be a univariate or multivariate leakage-reduction operator. The operation is nonlinear with respect to the original raw signal pair because analytic extraction, orthogonalization, and correlation are composed operations. Consequently, no general conversion from phase connectivity to envelope connectivity exists without strong restrictions on amplitude--phase dependence.

\subsection{A second spectral identifiability result}
Consider two narrow-band signals with fixed coherence magnitude $q$. Their average phase difference and phase-difference concentration need not be unique because coherence combines magnitude and relative-phase information through a complex-valued cross-spectrum. A family of signal states can preserve $|\gamma_{xy}(f)|=q$ while changing the phase distribution across time/frequency. Thus a coherence magnitude generally does not determine PLV, PLI, or wPLI unless the relevant amplitude and phase model is restricted. The existence of this non-uniqueness is precisely why an empirical coherence--PLV correlation should not be reported as a universal conversion law.

\section{Phase-Based Connectivity}
\label{sec:phase}

\subsection{PLV as a circular moment}
The PLV is the magnitude of the first circular moment of the phase-difference distribution. It summarizes concentration around a preferred phase relation, but does not uniquely identify the underlying angular distribution. Two phase distributions can share the same first circular moment while differing in higher moments or sign balance.

A useful state-space representation is therefore
\begin{equation}
f_{\Delta\phi}(\theta),\qquad 0\leq \theta<2\pi,
\end{equation}
with
\begin{equation}
\PLV=\left|\int_0^{2\pi}e^{i\theta}f_{\Delta\phi}(\theta)\,d\theta\right|.
\end{equation}
This makes clear that PLV is one scalar functional of an entire probability density. The same principle underlies the theoretical treatment of PLV and PLI under circular distributions \cite{Aydore2013,Bruna2018}.

\subsection{PLI and wPLI}
The PLI emphasizes the sign of the imaginary part of the cross-spectrum or phase lag,
\begin{equation}
\PLI=\left|\E\left[\operatorname{sign}(\Im Z)\right]\right|,
\end{equation}
where $Z$ denotes an appropriate complex cross-spectral quantity. The weighted version is
\begin{equation}
\wPLI=\frac{|\E[\Im Z]|}{\E[|\Im Z|]},
\end{equation}
up to the estimator convention used in a particular implementation. PLI, wPLI, and PPC were developed to address distinct biases associated with phase synchronization and common sources \cite{Stam2007,Vinck2010,Vinck2011}.

\begin{proposition}[PLI does not identify wPLI in general]
There exist phase-lag distributions having the same PLI and different wPLI. Therefore PLI $\not\succeq$ wPLI on an unrestricted admissible phase-distribution space.
\end{proposition}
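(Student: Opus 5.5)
The proof will use the constructive-counterexample route described earlier: exhibit two admissible states $x_1,x_2$ with $\PLI(x_1)=\PLI(x_2)$ but $\wPLI(x_1)\neq\wPLI(x_2)$. By the fiber criterion, $\wPLI$ then fails to be constant on a fiber of $\PLI$, so no $g$ with $\wPLI=g\circ\PLI$ exists. By definition of the information order this is exactly PLI $\not\succeq$ wPLI.

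The state is taken to be the distribution of the scalar $W=\Im Z$, since both functionals depend only on it. We use $\PLI=|\E[\operatorname{sign} W]|$ and $\wPLI=|\E W|/\E|W|$. Simple two-point laws suffice. Let $W$ take the value $a>0$ with probability $p$ and $-b<0$ with probability $1-p$. Then
\begin{equation}
\PLI=|2p-1|,\qquad \wPLI=\frac{|pa-(1-p)b|}{pa+(1-p)b}.
\end{equation}
Fix $p=3/4$, so $\PLI=1/2$ in both states. For $x_1$ take $a=b=1$, which gives $\wPLI=(3/4-1/4)/1=1/2$. For $x_2$ take $a=1$, $b=3$, which gives $|3/4-3/4|=0$ and hence $\wPLI=0$.

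These two states therefore share the same PLI but differ in wPLI. They are phase-lag configurations: take a constant-amplitude cross-spectral quantity $Z=Ae^{i\Delta\phi}$ with $\Delta\phi$ supported on two angles whose sines are $1$ and $-b$ respectively, scaled by $A$ if needed. Alternatively, keep the phase angles and let the amplitude weight differ between them; either way both lie in the unrestricted admissible space.

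The main point to check is that the construction really is admissible, with no degenerate zero-denominator case. Both $\E|W|>0$ and $W\neq0$ almost surely hold, so sign conventions at zero do not matter. If one insists on continuous densities rather than atoms, replace each atom by a narrow uniform density on an interval away from $0$. Since signs are preserved, PLI is unchanged exactly, and wPLI moves continuously, so the two values remain distinct for sufficiently narrow intervals. As a remark, varying $b$ over $(0,\infty)$ at fixed $p=3/4$ traces $\wPLI$ over $[0,1)$. The attainable set $\Aset_{\wPLI|\PLI}(1/2)$ is therefore far from a singleton.
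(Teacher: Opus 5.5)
Your proof is correct and follows the paper's own route: a two-point law for $\Im Z$ with sign probabilities $3/4$ and $1/4$ (so $\PLI=1/2$ in both states), with only the lag magnitudes changed. The paper uses magnitudes $(0.5,0.5)$ versus $(1,0.1)$, giving $\wPLI=0.5$ versus $\approx 0.935$, whereas your $(1,1)$ versus $(1,3)$ gives $1/2$ versus $0$. Your additions are correct refinements of the same construction: the constant-amplitude realization with $A\geq\max(a,b)$ and angles of sine $a/A$ and $-b/A$, the smoothing remark, and the sweep $\wPLI=|3-b|/(3+b)$ filling $[0,1)$.
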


\noindent\textit{Constructive example:} Let the imaginary-part magnitudes be $m_+=0.5$ for positive-lag events and $m_-=0.5$ for negative-lag events, with probabilities $0.75$ and $0.25$, respectively. Then $\PLI=|0.75-0.25|=0.5$ and $\wPLI=0.5$. Keep the same sign probabilities but set $m_+=1$ and $m_-=0.1$. The PLI remains $0.5$, while
\begin{equation}
\wPLI=\frac{0.75(1)-0.25(0.1)}{0.75(1)+0.25(0.1)}\approx0.935.
\end{equation}
Hence the source scalar PLI cannot determine wPLI. The reverse implication also fails in the unrestricted setting because many combinations of sign probabilities and lag magnitudes can yield the same weighted ratio.

\subsection{PLI, wPLI, imaginary coherency, and PPC}
Imaginary coherency is another functional of complex cross-spectral quantities, while PLI and wPLI apply different nonlinear transformations before averaging. There is therefore no general one-to-one map among these measures. The distinction is not semantic: the measures can respond differently to amplitude variation, phase-lag magnitude, and zero-lag mixtures \cite{Nolte2004,Stam2007,Vinck2011}.

PPC is a pairwise-statistical correction designed to reduce sample-size bias in phase synchrony estimates \cite{Vinck2010}. Because PPC and PLV are related through the empirical distribution of pairwise phase differences but not numerically identical at finite sample size, neither should be treated as interchangeable without specifying the estimator and sample regime.

\subsection{Phase slope index}
The phase-slope index (PSI) estimates the sign of frequency-dependent phase slopes across neighboring frequencies. It targets directionality under assumptions different from those of PLI or wPLI \cite{Nolte2008}. A single scalar PLI therefore does not identify PSI, because PLI is insensitive to the frequency derivative of phase. The pairwise relationship classes considered in this section are summarized in Table~\ref{tab:relationships}, and the phase-oriented information dimensions are collected in Table~\ref{tab:phase_measures}.

\begin{table*}[t]
\caption{Representative pairwise relationships classified by attainable-set analysis.}
\label{tab:relationships}
\centering
\scriptsize
\begin{tabularx}{\textwidth}{p{2.2cm} p{2.2cm} p{2.8cm} X}
\toprule
Source & Target & Class & Missing information / required restriction \\
\midrule
Covariance & Correlation & Conditional exact & Marginal standard deviations \\
Correlation & Covariance & Unbounded without scale bounds & Absolute marginal scale \\
Gaussian correlation & Gaussian MI & Exact within Gaussian family & Joint Gaussianity \\
General correlation & MI & Non-identifiable & Higher-order/nonlinear dependence \\
Constant-amplitude coherency magnitude & PLV & Exact under signal model & Constant amplitude and compatible narrow-band representation \\
General coherence magnitude & PLV & Non-identifiable & Amplitude-phase weighting \\
PLI & wPLI & Non-identifiable & Magnitude distribution of phase lag \\
wPLI & PLI & Non-identifiable & Unweighted sign probabilities \\
PLI & PSI & Non-identifiable & Frequency-dependent phase slope \\
Gaussian TE & GC & Exact up to convention & Gaussian process/model assumptions \\
General TE & GC & Conditional / non-identifiable & Distribution, embedding, and model class \\
Single graph edge & network metric & Non-identifiable & Remaining edges, node definition, thresholding \\
Mean dynamic FC & variance/dwell time & Non-identifiable & Temporal trajectory \\
\bottomrule
\end{tabularx}
\end{table*}

\subsection{PLV, PPC, PLI, and wPLI as different projections of phase-lag structure}
For an epoch containing $N$ phase-lag observations $\Delta\phi_n$, the sample PLV is
\begin{equation}
\widehat{\PLV}=\left|\frac{1}{N}\sum_{n=1}^{N}e^{i\Delta\phi_n}\right|.
\end{equation}
The corresponding population quantity is the magnitude of the first circular moment. PPC instead forms a bias-reduced pairwise statistic and was introduced to avoid the sample-size dependence of conventional PLV 
\cite{Vinck2010}. PLI reduces each phase difference to the sign of its imaginary component:
\begin{equation}
\PLI=\left|\E[\operatorname{sign}(\sin\Delta\phi)]\right|,
\end{equation}
while wPLI weights those signs by the magnitude of the imaginary cross-spectrum:
\begin{equation}
\wPLI=\frac{|\E[\Im Z]|}{\E[|\Im Z|]},
\end{equation}
where $Z$ denotes the relevant complex cross-spectrum quantity 
\cite{Stam2007,Vinck2011}. These definitions make the discarded information explicit: PLI preserves sign imbalance; wPLI additionally retains a particular magnitude weighting; PLV retains angular concentration but is insensitive to the sign of the preferred lag.

The constructive PLI--wPLI counterexample above is sufficient to establish the non-identifiability result without introducing a second proof. In practical analyses, the important consequence is that the magnitude weighting used by wPLI must be treated as additional state information whenever one attempts a PLI-to-wPLI inference.

\subsection{Circular statistics and sample-size effects}
PLV is a circular concentration measure, not a raw coincidence count. Its expected finite-sample behavior depends on the number of observations, underlying phase concentration, and sampling scheme 
\cite{Aydore2013,Bruna2018}. This means that even when two estimators share a population-level quantity, their finite-sample estimators need not induce the same ordering across subjects or conditions. PPC was specifically developed to remove a sample-size bias present in pairwise phase consistency estimation 
\cite{Vinck2010}. Cross-measure inference must therefore state whether $C_i$ and $C_j$ refer to population functionals, biased sample estimators, or corrected estimators.

\begin{table*}[t]
\caption{Representative phase-based connectivity measures and the information dimensions emphasized by each functional.}
\label{tab:phase_measures}
\centering
\scriptsize
\begin{tabularx}{\textwidth}{p{1.5cm} p{4.1cm} p{3.0cm} X}
\toprule
Measure & Canonical summary & Main emphasis & Important discarded or conditioned information \\
\midrule
PLV & $|\E[e^{i\Delta\phi}]|$ & circular concentration & higher circular moments and signed lag direction \\
PPC & pairwise phase-vector similarity & bias-reduced phase consistency & direct amplitude magnitude; still phase-only \\
PLI & $|\E[\operatorname{sign}(\sin\Delta\phi)]|$ & lead/lag sign asymmetry & lag magnitude and angular distance from zero \\
wPLI & $|\E[\Im Z]|/\E[|\Im Z|]$ & magnitude-weighted nonzero-lag asymmetry & unweighted sign balance and detailed magnitude distribution \\
Imaginary coherence & normalized $|\Im S_{xy}|$ & nonzero phase-lag component of cross-spectrum & zero-lag component and full distribution \\
Phase-slope index & weighted frequency derivative of phase & directed propagation tendencies across frequency & instantaneous relation and many amplitude details \\
\bottomrule
\end{tabularx}
\end{table*}

\subsection{Zero-lag sensitivity is not the same as mechanistic specificity}
Measures designed to reduce zero-lag contamination should be described precisely. The imaginary part of coherency is insensitive to a purely real zero-phase cross-spectrum under the idealized instantaneous-mixing argument of Nolte et al. \cite{Nolte2004}. PLI and wPLI similarly target nonzero-lag phase relations 
\cite{Stam2007,Vinck2011}. However, a nonzero-lag relation can still arise from common driving, source mixing, filter-induced phase structure, or other indirect mechanisms. The attainable-set framework thus separates \emph{robustness to a nuisance} from \emph{identification of a biological mechanism}. These are different mathematical claims.

\section{Information-Theoretic and Directed Connectivity}
\label{sec:information}

\subsection{Mutual information}
Mutual information is
\begin{equation}
\MI(X;Y)=\int\!\!\int p(x,y)\log\frac{p(x,y)}{p(x)p(y)}\,dx\,dy.
\end{equation}
Its attainable set conditioned on correlation can be wide because the correlation coefficient is only one second-order summary. Estimation introduces a separate layer of uncertainty, with finite-sample bias depending on the estimator and support geometry \cite{Kraskov2004,Paninski2003}.

\subsection{Transfer entropy and Granger causality}
Transfer entropy can be written as a conditional mutual information between the future of one process and the past of another given the target's own past. For suitable Gaussian processes and matched conditioning conventions, transfer entropy and Granger causality are mathematically equivalent up to the choice of logarithmic base and scaling convention \cite{Schreiber2000,Barnett2009}. Outside that model class, the equivalence is lost.

\begin{proposition}[Model-conditional equivalence of Gaussian TE and GC]
For jointly Gaussian processes and a matched linear-history conditioning structure, the transfer-entropy functional and Granger-causality functional are deterministic transforms of one another. This does not imply a universal equivalence over arbitrary non-Gaussian processes.
\end{proposition}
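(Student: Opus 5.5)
The plan is to prove the positive part by explicit computation within the Gaussian class, then the negative part by a constructive counterexample as in the subsection on non-identifiability. Fix a stationary jointly Gaussian bivariate process $(X_t,Y_t)$ with finite embedding orders $k,l$. Let $\mathbf X^{(k)}_{t}$ and $\mathbf Y^{(l)}_{t}$ denote the past vectors. Define transfer entropy as the conditional mutual information
\begin{equation}
\mathrm{TE}_{Y\to X}=\MI\bigl(X_t;\mathbf Y^{(l)}_{t}\,\big|\,\mathbf X^{(k)}_{t}\bigr).
\end{equation}
Define Granger causality, for the same histories, as
\begin{equation}
\mathrm{GC}_{Y\to X}=\log\frac{\Sigma_{\mathrm r}}{\Sigma_{\mathrm f}},
\end{equation}
where $\Sigma_{\mathrm r}$ and $\Sigma_{\mathrm f}$ are the residual variances of the optimal linear predictors of $X_t$ from $\mathbf X^{(k)}_{t}$ alone and from $(\mathbf X^{(k)}_{t},\mathbf Y^{(l)}_{t})$. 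The ``matched conditioning structure'' hypothesis is exactly this use of identical pasts in both functionals, and the admissible state space $\Xspace$ is this Gaussian family.

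The key steps, in order, are these.
\begin{enumerate}[leftmargin=*]
\item Expand the conditional MI as a difference of conditional differential entropies,
\begin{equation}
h\bigl(X_t\mid \mathbf X^{(k)}_{t}\bigr)-h\bigl(X_t\mid \mathbf X^{(k)}_{t},\mathbf Y^{(l)}_{t}\bigr).
\end{equation}
\item Use the fact that, for jointly Gaussian vectors, the conditional law of $X_t$ given any subvector of the past is Gaussian. Its variance equals the Schur complement of the joint covariance, and hence does not depend on the conditioning value.
\item This Schur complement coincides with the mean-square error of the best linear predictor, because conditional expectation is linear in the Gaussian case. So each conditional entropy equals $\tfrac12\log(2\pi e\,\Sigma)$ with $\Sigma=\Sigma_{\mathrm r}$ or $\Sigma_{\mathrm f}$.
\item Subtracting gives $\mathrm{TE}=\tfrac12\,\mathrm{GC}$ in natural logarithms, following Barnett et al.\ \cite{Barnett2009}. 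The same Schur-complement argument applies to the multivariate or conditional case with determinants of residual covariance matrices.
\end{enumerate}
This map is strictly increasing and bijective on $[0,\infty)$. By the fiber criterion and the information-order proposition, each of TE and GC is then a deterministic transform of the other, with $C_{\mathrm{TE}}\succeq C_{\mathrm{GC}}$ and $C_{\mathrm{GC}}\succeq C_{\mathrm{TE}}$ on $\Xspace$. Other logarithm bases or conventions change only a positive constant factor.

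For the negative part, I would exhibit two admissible non-Gaussian processes with equal GC and different TE; under the fiber criterion this already shows that no function $g$ with $\mathrm{TE}=g(\mathrm{GC})$ exists on the enlarged class. The simplest choice takes i.i.d.\ $Y_t$ symmetric, for example standard Gaussian, and sets
\begin{equation}
X_t=Y_{t-1}^2-1+\varepsilon_t,
\end{equation}
with independent noise $\varepsilon_t$. Since $\Cov(Y_{t-1}^2,Y_{t-1}^{j})=0$ for odd $j$, every linear predictor from the past of $Y$ gains nothing, so $\mathrm{GC}=0$. But $X_t$ is not independent of $Y_{t-1}$, so $\mathrm{TE}>0$. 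Comparing with a process in which $X$ and $Y$ are fully independent, which gives $\mathrm{GC}=\mathrm{TE}=0$, yields two states in the same GC fiber with different TE.

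The main obstacle is making the counterexample's calculations exact rather than heuristic. One must verify that all cross-covariances between $X_t$ and the full $Y$-past vanish, including through the $X$-past. This holds because $X$'s own past is independent of $Y_{t-1}$ given the construction. One must also check that $\MI(X_t;Y_{t-1}\mid \mathbf X^{(k)}_t)>0$. I would argue this by noting that the conditional law of $X_t$ varies with $Y_{t-1}^2$, which suffices without computing the value.
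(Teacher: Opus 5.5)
Your proposal is correct and follows the route the paper relies on. The paper gives no derivation of its own and cites Barnett et al.\ for the Gaussian identity; your Schur-complement argument (each Gaussian conditional entropy equals $\tfrac12\log(2\pi e\,\Sigma)$ with $\Sigma$ the linear-prediction residual variance, hence $\mathrm{TE}=\tfrac12\,\mathrm{GC}$) is exactly that result. For the non-Gaussian part the paper only asserts that the equivalence is lost, and your equal-GC, unequal-TE pair built from $X_t=Y_{t-1}^2-1+\varepsilon_t$ supplies this correctly: GC $=0$ because $X_t$ is uncorrelated with every regressor (using $\E[Y^3]=0$), and TE $>0$ because $(X_t,Y_{t-1})$ is independent of the $X$-past while $X_t$ depends on $Y_{t-1}$. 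This is the same constructive-counterexample style the paper uses for PLI and wPLI.
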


The distinction mirrors the correlation/MI result: a relationship can be exact in a restricted state space and non-identifiable in a larger one. Consequently, the state-space restrictions should be reported whenever one connectivity measure is converted into another.

\subsection{DTF and PDC}
Directed transfer function (DTF) and partial directed coherence (PDC) are frequency-domain representations derived from multivariate autoregressive structure. DTF is naturally associated with normalized transfer magnitudes, whereas PDC normalizes differently and is related to the autoregressive coefficients themselves \cite{Kaminski1991,KaminskiBlinowska2001,BaccalaSameshima2001}. Granger causality provides an additional scalar or frequency-resolved representation of directional predictability. The three families are related but not interchangeable.

\subsection{Phase slope and spectral directionality}
The phase-slope index provides an alternative route to directional inference with a focus on frequency-dependent phase structure \cite{Nolte2008}. A key lesson from the framework is that a directional label alone does not define a common information object. DTF, PDC, GC, PSI, and TE use different aspects of the multivariate process. Cross-measure conversion therefore requires a model class and, often, additional state variables.

\subsection{Conditioning and hidden variables in directed connectivity}
For multivariate processes, pairwise directional measures can be confounded by an omitted common driver. Let $X\leftarrow Z\rightarrow Y$ denote a common-input configuration. A pairwise directional statistic between $X$ and $Y$ may be nonzero even when there is no direct interaction. Conditional or multivariate measures incorporate $Z$ into the model, but this changes the state dimension and therefore the fibers of the observable. The mathematical lesson is parallel to partial correlation: adding a conditioning variable can refine the inferential object, but it does not turn a pairwise value into a universal representation of the multivariate system.

\subsection{Multivariate autoregressive structure}
Suppose $\mathbf x_t$ follows a vector autoregressive model of order $p$,
\begin{equation}
\mathbf x_t=\sum_{k=1}^{p}\mathbf A_k\mathbf x_{t-k}+\boldsymbol\varepsilon_t.
\label{eq:var}
\end{equation}
The model parameters determine spectral transfer functions and several directed connectivity measures. However, a scalar GC summary is itself a compression of the coefficient matrices $\{\mathbf A_k\}$ and innovation covariance $\mathbf \Sigma_\varepsilon$. Two distinct VAR models can have the same scalar GC value for one direction but different PDC, DTF, spectral GC, or conditional GC elsewhere in the network. Hence directed-connectivity conversion is generally a matrix-to-scalar-to-matrix inverse problem.

\subsection{Frequency resolution and directionality}
Spectral directional measures depend on the frequency transformation of the fitted process. DTF and PDC differ in normalization and interpretation \cite{Kaminski1991,KaminskiBlinowska2001,BaccalaSameshima2001}. Fourier- and wavelet-based GC estimation provides additional choices in time--frequency localization \cite{Dhamala2008a,Dhamala2008b}. These choices alter the estimator family and therefore should appear explicitly in the admissible state description when cross-measure comparisons are attempted. Table~\ref{tab:directed} summarizes the resulting inferential distinctions among MI, TE, GC, DTF, PDC, and PSI.

\begin{table*}[t]
\caption{Representative information-theoretic and directed measures.}
\label{tab:directed}
\centering
\scriptsize
\begin{tabularx}{\textwidth}{p{2cm} p{4cm} p{3.0cm} X}
\toprule
Measure & Mathematical basis & Typical conditioning & Primary identifiability caveat \\
\midrule
MI & KL divergence $D_{KL}(p_{XY}\Vert p_Xp_Y)$ & none & depends on the full joint distribution \\
Transfer entropy & conditional MI between past/future variables & selected histories & embedding, estimator support, and hidden variables \\
GC & reduction in prediction error & target/source histories and model class & model order, linearity, and omitted variables \\
DTF & VAR transfer function magnitude & multivariate VAR & normalization and frequency-specific denominator \\
PDC & normalized AR coefficient representation & multivariate VAR & normalization changes the quantity represented \\
PSI & frequency-weighted phase-slope statistic & selected cross-spectra & depends on spectral resolution and phase structure \\
\bottomrule
\end{tabularx}
\end{table*}

\section{Cross-Frequency Connectivity}
\label{sec:cfc}
Cross-frequency coupling (CFC) describes interactions between oscillatory components operating at different frequencies. Phase-amplitude coupling (PAC), for example, can be expressed through a statistic of low-frequency phase and high-frequency amplitude. The literature has emphasized that apparent PAC can arise from waveform asymmetry, harmonics, filtering interactions, nonstationarity, and other sources \cite{CanoltyKnight2010,Tort2010,Aru2015,Hyafil2017}.

A generic PAC functional can be written
\begin{equation}
C_{\mathrm{PAC}}=F\!\left(\{\phi_{\ell}(t),A_h(t)\}_{t=1}^T\right),
\end{equation}
where the exact form of $F$ depends on the estimator. A single PAC value does not identify the joint time-frequency distribution, the amplitude marginal, or the low-frequency phase distribution. Two states can therefore have identical PAC but very different cross-frequency organization.

\begin{proposition}[PAC non-identifiability under unrestricted waveform classes]
If the admissible state space allows independent changes in waveform shape, amplitude distribution, or phase history while preserving the chosen PAC scalar, then the PAC value does not identify those latent features. Hence a PAC value cannot in general be translated into a unique cross-frequency phase relationship or a unique underlying waveform model.
\end{proposition}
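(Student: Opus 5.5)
The plan is to reduce the statement to the constructive-counterexample criterion of the non-identifiability subsection together with the fiber criterion for exact inference. The hypothesis says that the admissible space $\Xspace$ contains deformations that change a latent feature $L$ (waveform shape, amplitude marginal, or phase history) while keeping $C_{\mathrm{PAC}}$ fixed. I will therefore show that for each such feature there exist $x_1,x_2\in\Xspace$ with $C_{\mathrm{PAC}}(x_1)=C_{\mathrm{PAC}}(x_2)$ and $L(x_1)\neq L(x_2)$. The fiber criterion then rules out any $g$ with $L=g\circ C_{\mathrm{PAC}}$, so $C_{\mathrm{PAC}}\not\succeq L$ in the information order. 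For the second sentence of the statement, take $L$ to be the preferred coupling phase, or the waveform-generating model, and apply the same argument.

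To keep the argument estimator-agnostic, I will use the hypothesis abstractly. It supplies a path $x(s)$ in $\Xspace$ along which $L$ varies and $F$ stays constant, and picking two distinct points on that path already finishes the proof. I will also give concrete witnesses for the standard mean-vector-length functional $F=|\E[A_h e^{i\phi_\ell}]|$, since this shows the hypothesis is non-vacuous.

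\begin{itemize}[leftmargin=*]
\item \emph{Preferred phase.} Rotating the modulation phase, $A_h(t)=1+\kappa\cos(\phi_\ell(t)-\theta_0)$ with $\phi_\ell$ uniform, gives the same modulus for every $\theta_0$. The coupling phase therefore differs while PAC is unchanged.
\item \emph{Amplitude marginal.} Replacing $A_h$ by $A_h+\eta$, where $\eta$ is independent zero-mean noise that keeps $A_h\geq 0$, leaves $\E[A_h e^{i\phi_\ell}]$ unchanged but alters the amplitude distribution.
\item \emph{Waveform shape.} One could compare a non-sinusoidal, harmonic-rich low-frequency waveform that produces spurious PAC after filtering with a genuine sinusoidal modulation, scaling $\kappa$ so that the two PAC values match.
\end{itemize}

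The main obstacle is the waveform case. The PAC value there comes from a filtered signal, so matching it exactly requires a continuity argument. I would fix $F$ for the asymmetric waveform and then vary $\kappa$ in the genuine-coupling model from $0$ upward. Because $F$ is continuous in $\kappa$ and ranges over an interval containing the target value, the intermediate value theorem gives equality.

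If time is short, it is enough to note that the proposition's hypothesis already asserts the existence of such preserving changes. The proof is then just the two-state criterion, and the witnesses above serve only to show that these hypotheses hold for typical estimators.
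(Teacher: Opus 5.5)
Your proposal is correct and follows the paper's own reasoning. The paper gives no separate proof: the proposition is an immediate application of its constructive two-state counterexample criterion to the PAC-preserving changes supplied by the hypothesis, which is exactly your first paragraph. Your mean-vector-length witnesses are correct additions that the paper does not provide; for instance, the rotated modulation gives $|\E[A_h e^{i\phi_\ell}]|=\kappa/2$ for every $\theta_0$. In the waveform case you can drop the intermediate-value step, and its unstated assumption that the target value lies in the attainable range, by rescaling the high-frequency amplitude, since this unnormalized statistic is proportional to it.
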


The framework also clarifies why PAC estimators should not be compared solely by numerical magnitude. The estimators can induce different partitions of signal space, especially when they use different surrogates, phase estimators, modulation indices, or amplitude normalization schemes.

\subsection{A generic cross-frequency functional}
Let $\phi_\ell(t)$ be the phase of a low-frequency component and $A_h(t)$ the amplitude envelope of a high-frequency component. Many PAC methods can be written as a functional of their joint distribution,
\begin{equation}
C_{\mathrm{PAC}}=\mathcal F\!\left[p(\phi_\ell,A_h)\right].
\end{equation}
A scalar PAC value therefore discards the complete conditional amplitude distribution $p(A_h\mid\phi_\ell)$ and the phase distribution $p(\phi_\ell)$ except insofar as they influence the selected statistic. The same scalar can be obtained by different combinations of modulation depth, waveform asymmetry, amplitude variance, and phase occupancy.

\subsection{Waveform shape and harmonic confounding}
A non-sinusoidal oscillation contains harmonics. Filtering a non-sinusoidal waveform into nominally separated low- and high-frequency bands can therefore produce apparent phase--amplitude structure without a separate physiological oscillator. The literature on CFC has emphasized waveform shape, nonstationarity, and filtering as potential sources of spurious coupling \cite{CanoltyKnight2010,Tort2010,Aru2015,Hyafil2017}. In the present framework these are not merely ``pitfalls''; they correspond to alternative states $x\in\Xspace$ that can produce similar or identical reported PAC values while representing different generative mechanisms.

\begin{proposition}[Mechanism non-identifiability of a PAC scalar]
Let $M$ be a PAC statistic and suppose the admissible state space contains at least two generative families, one involving a genuine cross-frequency interaction and another involving waveform harmonics or filtering-induced dependence. If both families contain states with the same $M$ value, then the PAC scalar cannot identify the generative mechanism. A biological mechanism can therefore only be inferred after the admissible state space has been restricted to exclude the alternative explanations or additional measurements have been introduced.
\end{proposition}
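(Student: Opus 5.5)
The proof formalizes ``the generative mechanism'' as a functional and then applies the non-identifiability criterion of Section~III. Augment the admissible state space so that each state carries a mechanism label, i.e.\ define a map $\mathrm{Mech}:\Xspace\rightarrow\{\mathrm{CFC},\mathrm{harm}\}$. Here $\mathrm{Mech}(x)=\mathrm{CFC}$ on the family involving a genuine cross-frequency interaction, and $\mathrm{Mech}(x)=\mathrm{harm}$ on the family involving waveform harmonics or filtering-induced dependence. Treat $\mathrm{Mech}$ as the target functional $C_j$ and the PAC statistic $M$ as the source $C_i$. Identifying the mechanism from $M$ then means exactly that $\mathrm{Mech}$ is exactly determined by $M$.

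By the fiber criterion for exact inference, this holds if and only if $\mathrm{Mech}$ is constant on every fiber $\Fset_M(m)$. The hypothesis supplies states $x_1$ in the genuine-coupling family and $x_2$ in the harmonic/filtering family with $M(x_1)=M(x_2)=m$. Hence $\Aset_{\mathrm{Mech}|M}(m)\supseteq\{\mathrm{CFC},\mathrm{harm}\}$ is not a singleton. This is precisely the constructive-counterexample template $C_i(x_1)=C_i(x_2)$, $C_j(x_1)\neq C_j(x_2)$, so no function $g$ with $\mathrm{Mech}=g\circ M$ exists.

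For the second sentence of the statement, invoke the monotonicity proposition. Either restrict $\Xspace$ to $\Xspace'\subseteq\Xspace$, or introduce an augmentation $Z$. Either way, the attainable set can only shrink. Exact identification requires shrinking it to a singleton, which means every joint fiber of $(M,Z)$, or every fiber of $M$ in $\Xspace'$, must lie in a single family. That is the statement that alternative explanations are excluded or separated, as in the minimal-sufficient-augmentation corollary.

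The main obstacle is one of justification rather than logic. The hypothesis of equal $M$ values is assumed, not derived, so the argument above is essentially a rephrasing of the definitions. To make it substantive, I would sketch a concrete witness. On the genuine side, take a low-frequency sinusoid whose phase modulates a high-frequency carrier, with modulation depth $\kappa$ as a free parameter. On the harmonic side, take a non-sinusoidal periodic waveform whose harmonic content falls in the high band after filtering. Its apparent PAC $M(x_2)=m_0>0$ depends continuously on waveform asymmetry. Next, argue that $M$ restricted to the genuine family depends continuously on $\kappa$, taking the value $0$ at $\kappa=0$ and values exceeding $m_0$ for large $\kappa$. The intermediate value theorem then yields a $\kappa^*$ with $M(x_1)=m_0$. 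Verifying this continuity and range for a specific modulation-index estimator is the step I expect to require the most care.
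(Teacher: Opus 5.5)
Your proposal is correct and matches the paper's own reasoning, which is only implicit since the paper gives no separate proof: it applies the Section~III constructive-counterexample criterion with the mechanism as the target, and the monotonicity proposition covers restriction or augmentation, exactly as you do. Your intermediate-value witness goes beyond what this conditional statement requires, because the equal-$M$ hypothesis is assumed rather than derived, and the paper only cites the cross-frequency-coupling literature to support that such states exist.
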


\subsection{Surrogates test a null, not an identity}
Surrogate procedures are often used to test whether CFC exceeds a chosen null model. A rejection of a surrogate null does not establish that the observed PAC value uniquely determines any other connectivity metric, nor does it identify the physiological mechanism that generated the coupling. The result is a statement about incompatibility with a particular null distribution. This distinction is essential when a significant PAC result is subsequently translated into claims about synchronization or directed communication.

\section{Measurement Physics, Reference Choice, Filtering, and Source Leakage}
\label{sec:measurement}

\subsection{Measurement operator as part of the state-to-observable map}
Suppose latent sources satisfy $\mathbf{s}(t)\in\R^K$ and sensor data are
\begin{equation}
\mathbf{y}(t)=\mathbf{L}\mathbf{s}(t)+\boldsymbol\epsilon(t),
\end{equation}
where $\mathbf L$ is a forward/mixing matrix. Even when source signals are statistically independent, linear superposition can induce dependence at the sensors. This is the basic identifiability problem of the EEG inverse map \cite{Michel2004,Grech2008,SchoffelenGross2009}.

\subsection{Reference transformation}
For an $N$-channel record, an average-reference operator may be written
\begin{equation}
\mathbf R_{\mathrm{avg}}=\mathbf I-\frac{1}{N}\mathbf 1\mathbf 1^{\mathsf T}.
\end{equation}
The operator is idempotent but rank-deficient. Consequently it changes pairwise covariance and correlation structure and cannot be treated as a pure relabeling of channels. Reference choice has been shown empirically to affect information-theoretic and spatio-temporal interrelation measures \cite{Nunez1997,Nunez1997,Trujillo2017,Rios2019,Yao2001}.

\subsection{Volume conduction and instantaneous mixing}
A zero-lag instantaneous mixing model
\begin{equation}
\mathbf y=\mathbf M\mathbf s
\end{equation}
can create cross-channel dependence even when the source processes are uncoupled. Measures based on nonzero-lag or imaginary components were developed partly to reduce sensitivity to this problem \cite{Nolte2004,Stam2007,Vinck2011}. Importantly, reducing one confound does not make an estimator ``ground truth'': the transformation changes the admissible state-to-observable relationship rather than revealing the unobserved source relation exactly.

\subsection{Leakage correction and orthogonalization}
Source-space amplitude-envelope connectivity is sensitive to spatial leakage. Multivariate leakage correction and orthogonalization methods explicitly alter the source time series before connectivity is calculated \cite{Colclough2015,Brookes2012,SanchezBornot2018}. The mathematical consequence is straightforward: if $\mathcal O$ denotes an orthogonalization transform, then the relevant observable is $C(\mathcal O\mathbf s)$, not $C(\mathbf s)$. Two pipelines that use different leakage transforms are therefore operating on different admissible spaces.

\subsection{Filtering and phase estimation}
Digital filtering changes both amplitude and phase. Filter order, transition bandwidth, edge handling, and zero-phase versus causal implementation all affect the resulting analytic signal and hence connectivity \cite{Widmann2015,LeVanQuyen2001,Bruns2004}. High-pass filtering can additionally influence ICA-based artifact removal \cite{Widmann2015b,DelormeMakeig2004,PionTonachini2019}. In the attainable-set framework, filtering parameters are nuisance coordinates that must either be fixed, conditioned upon, or bounded. The major methodological transformations are summarized in Table~\ref{tab:transformations}, and the full measurement-to-functional chain is shown in Fig.~\ref{fig:measurement_chain}.

\begin{table*}[t]
\caption{Methodological transformations and their role in cross-measure identifiability.}
\label{tab:transformations}
\centering
\scriptsize
\begin{tabularx}{\textwidth}{p{2.4cm} p{2.8cm} X X}
\toprule
Transformation & Representative operator & Typical consequence & Identifiability implication \\
\midrule
Reference & $\mathbf R\mathbf Y$ & changes covariance geometry and rank & source/target relationships can change under re-reference \\
Filtering & convolution / frequency response & changes spectra, phase, transients & exact conversion requires fixed filter or explicit conditioning \\
ICA/BSS & component projection & removes or separates selected components & target may be altered by data-dependent decomposition \\
Leakage correction & orthogonalization / multivariate correction & suppresses instantaneous source similarity & connectivity before and after correction are different observables \\
Source reconstruction & inverse operator $\mathbf W$ & changes spatial representation & source connectivity does not uniquely identify sensor connectivity \\
Windowing & segment operator & changes stationarity regime and estimator variance & dynamic and static estimates cannot be interchanged without a window model \\
Thresholding & $W_{ij}\mapsto\mathbf 1(W_{ij}>\tau)$ & discards edge weights & graph measures can change discontinuously \\
\bottomrule
\end{tabularx}
\end{table*}

\subsection{Reference operators as rank-reducing maps}
The average-reference operator $\mathbf R_{\mathrm{avg}}$ satisfies
\begin{equation}
\mathbf R_{\mathrm{avg}}^2=\mathbf R_{\mathrm{avg}},
\qquad
\mathbf R_{\mathrm{avg}}\mathbf 1=\mathbf 0.
\end{equation}
Thus one dimension of the channel space is removed. Any connectivity quantity that depends on the covariance matrix is consequently evaluated on a lower-dimensional representation after average referencing. Other references impose different linear operators. There is no general reason for a source-target relationship derived under one reference to remain unchanged under another because both $C_i$ and $C_j$ may change through the transformation.

\subsection{Filtering as an operator on the admissible state}
Let $\mathcal H$ denote an LTI filter with transfer function $H(f)$. If $x(t)$ is replaced by $x_H(t)=\mathcal Hx(t)$, then its spectrum becomes $H(f)X(f)$ and its phase becomes $\arg H(f)+\arg X(f)$. In a causal or non-zero-phase implementation, group delay is another state variable. Even when the same nominal passband is reported, two filters with different transition bands can produce different edge transients and phase relations. The connectivity functional should therefore be indexed conceptually by the filter specification, e.g. $C_i^{(H)}$.

\subsection{Artifact rejection is part of the observation model}
Let $\mathcal A$ denote an artifact-removal operator. ICA, independent-component rejection, regression, signal-space projection, and related methods are data-dependent transformations. If $\mathcal A$ is estimated from the same dataset, the transformation itself is random and depends on the observed sample. The estimator can then be written as
\begin{equation}
\widehat C_i=C_i\!\left(\mathcal A(\mathbf Y;\widehat\theta_{\mathcal A})\right),
\end{equation}
where $\widehat\theta_{\mathcal A}$ is learned from the recording. This creates a second source of uncertainty in addition to the sampling variance of $C_i$. Standardized pipelines such as PREP and ICLabel make reproducibility easier to study, but they do not eliminate the need to report the actual processing choices \cite{BigdelyShamlo2015,PionTonachini2019,Delorme2012}.

\subsection{Forward mixing and a non-uniqueness proposition}
If $\mathbf y=\mathbf L\mathbf s$ and the forward operator $\mathbf L$ is non-injective (i.e., has a nontrivial null space), multiple source configurations are compatible with the same sensor observation. Even when a particular inverse solution $\widehat{\mathbf W}$ is chosen, $\widehat{\mathbf s}=\widehat{\mathbf W}\mathbf y$ is an estimator determined by the chosen forward model, regularization, and constraints \cite{Grech2008,Michel2004,SchoffelenGross2009,PascualMarqui1994,PascualMarqui2011}. Thus a sensor connectivity value generally does not identify a unique source connectivity value.

\begin{proposition}[Sensor-to-source non-identifiability]
Suppose there exist $x_1,x_2\in\Xspace$ with identical sensor-level observable $C_i^{\mathrm{sens}}(x_1)=C_i^{\mathrm{sens}}(x_2)$ but distinct source-level target values $C_j^{\mathrm{src}}(x_1)\neq C_j^{\mathrm{src}}(x_2)$. Then source connectivity is not identifiable from the sensor connectivity statistic alone. This remains true even if a deterministic source reconstruction algorithm is applied, unless the inverse operator and the admissible source state have been sufficiently restricted to collapse the relevant fiber.
\end{proposition}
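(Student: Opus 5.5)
The first assertion follows directly from the fiber criterion (Proposition~1), applied with $C_i=C_i^{\mathrm{sens}}$ and $C_j=C_j^{\mathrm{src}}$. Let $c=C_i^{\mathrm{sens}}(x_1)=C_i^{\mathrm{sens}}(x_2)$. Both $x_1$ and $x_2$ then lie in the fiber $\Fset_i(c)$. Hence the attainable set $\Aset_{j|i}(c)$ contains the two distinct values $C_j^{\mathrm{src}}(x_1)$ and $C_j^{\mathrm{src}}(x_2)$, so it is not a singleton. By the definition of exact determination, $C_j^{\mathrm{src}}$ is therefore not exactly determined by $C_i^{\mathrm{sens}}$ on $\Xspace$. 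Equivalently, no function $g$ with $C_j^{\mathrm{src}}=g\circ C_i^{\mathrm{sens}}$ can exist, since it would have to take two different values at the single point $c$. This is exactly the constructive-counterexample principle stated in Section~III.

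For the second assertion, I would formalize ``applying a deterministic reconstruction'' as follows. A fixed inverse operator $\widehat{\mathbf W}$, together with any deterministic processing, turns the reported source estimate into a function of the sensor data. Any source-level quantity computed from that estimate therefore has the form $h(x)$, where $h$ is built from the sensor observables and fixed parameters. Two points need to be separated here. The first is that the reconstructed connectivity $C_j(\widehat{\mathbf W}\mathbf y)$ is a different functional from the true source target $C_j^{\mathrm{src}}(x)$. The second is that the reconstruction operates on the sensor data themselves, so it can use no information about $x$ beyond what the sensor observation carries. The cleanest way to state this is as an augmentation. Treat the output of the reconstruction as an augmentation $Z$ that is a deterministic function of the sensor observation. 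If the sensor statistic does not itself determine $Z$, refine the source observable to the full sensor data, since the hypothesis on $x_1,x_2$ can be stated at that level. Then $Z(x_1)=Z(x_2)$, so $x_1$ and $x_2$ remain in the same joint fiber $\{x:C_i(x)=c,\ Z(x)=z\}$. That joint fiber still contains the two distinct target values, so adding $Z$ does not collapse it. The monotonicity proposition (Proposition~3) says only that augmentation can shrink the attainable set, never enlarge it. Here it does not shrink it to a singleton, because $Z$ carries no information that separates $x_1$ from $x_2$.

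Finally I would show the escape clause. If $\Xspace$ is restricted to some $\Xspace'$ on which the inverse is exact for admissible states, or if the null space of $\mathbf L$ is excluded by assumption, then $x_1$ and $x_2$ are no longer both admissible. In that case the fiber criterion may succeed, and Corollary~1 describes the needed restriction as a sufficient augmentation.

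The main obstacle is the hypothesis of the second sentence. The proposition states it for the sensor statistic alone, whereas a reconstruction uses the full sensor data $\mathbf y$. A state pair that agrees only on the scalar statistic could be separated by a reconstruction that reads $\mathbf y$. I would handle this by reading the sentence as applying to pairs that are indistinguishable at the level of whatever data the reconstruction consumes. The canonical example is $x_1$ and $x_2$ differing by a null-space component of $\mathbf L$, so that $\mathbf y_1=\mathbf y_2$. I would state this reading explicitly and illustrate it with $\mathbf s_2=\mathbf s_1+\mathbf n$, $\mathbf n\in\ker\mathbf L$, chosen so that the source correlation changes.
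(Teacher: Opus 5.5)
Your proposal is correct and follows essentially the paper's own route: the first claim is the constructive-counterexample/fiber criterion applied directly to the given pair $x_1,x_2$. For the reconstruction claim, the paper likewise relies on $\widehat{\mathbf s}=\widehat{\mathbf W}\mathbf y$ being a deterministic function of the sensor data, so it cannot separate states that differ by a null-space component of $\mathbf L$ and hence produce identical sensor data (as in Case~6); your augmentation/monotonicity formulation, and your remark that the second sentence needs indistinguishability at the level of whatever the reconstruction consumes, make this informal argument precise.
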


\section{Dynamic Connectivity and Temporal Identifiability}
\label{sec:dynamic}
A dynamic connectivity analysis produces a trajectory $C(t)$ or a sequence of matrices $\mathbf W(t)$. The attainable-set concept therefore extends from values to trajectories:
\begin{equation}
\Aset_{C_j|C_i}(c_i(\cdot))=\{C_j(\cdot):C_i(\cdot)=c_i(\cdot),\ x(\cdot)\in\Xspace_{\mathrm{dyn}}\}.
\end{equation}
A static summary such as mean connectivity
\begin{equation}
\bar C=\frac{1}{T}\sum_{t=1}^T C(t)
\end{equation}
does not determine variance, dwell time, transition probabilities, or ordering of states. Many distinct trajectories share the same mean. Dynamic-connectivity reviews emphasize that temporal variability and methodological choices such as windowing or state estimation can substantially affect interpretation \cite{Calhoun2014,Preti2017,Lurie2020,Coelli2025,dePasquale2010,Baker2014,Allen2014}.

\begin{proposition}[Static-to-dynamic non-identifiability]
Suppose $C_1(t)$ and $C_2(t)$ are bounded trajectories with equal time average. Unless the admissible trajectory space imposes additional restrictions, the first two moments, dwell-time distribution, transition structure, and temporal ordering are not uniquely determined by the mean. Hence a static connectivity summary does not identify dynamic connectivity descriptors.
\end{proposition}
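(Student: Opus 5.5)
The plan is to use the constructive-counterexample route of the non-identifiability subsection: for each dynamic descriptor, exhibit two admissible bounded trajectories with the same time average but different values of that descriptor. By the fiber criterion (the first proposition), this shows the descriptor is not constant on a fiber of the mean functional $\bar C$. So no map $g$ with $C_j=g\circ\bar C$ can exist. Since the statement concerns an unrestricted admissible trajectory space $\Xspace_{\mathrm{dyn}}$, it suffices to show that the constructed trajectories are bounded, and realisable if one also wants them to arise from actual signal states.

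First I fix a common mean $m$ and work on a window of $T$ samples with $T$ even, say $T\ge 4$. I take $C_1(t)\equiv m$ and $C_2(t)=m+\delta(-1)^t$ with $0<\delta$ small enough that $C_2$ stays in the admissible range, e.g. $[0,1]$ for PLV-type measures. Both trajectories have mean $m$. However, $C_1$ has zero variance and $C_2$ has variance $\delta^2$, so the second moment is not determined.

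For dwell time and transition structure I use two-level trajectories taking values $m\pm\delta$ (two ``states''), each occupying exactly half the samples so that the mean is $m$. In $C_2$ the levels alternate every sample: dwell time $1$, and the transition probability between states is $1$. In $C_3$ the first half sits at $m+\delta$ and the second half at $m-\delta$: dwell time $T/2$, a single transition. $C_2$ and $C_3$ share the same mean and indeed the same first two moments. Their dwell-time distributions and empirical transition matrices differ, which shows these descriptors are not identified even after the mean and variance are supplied.

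For temporal ordering I compare $C_3$ with its time reversal, which has the same mean, the same moments and the same dwell times but a different ordering. This proves that the ordering is left undetermined by all the preceding summaries as well.

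The main obstacle is admissibility: the statement says ``unless the admissible trajectory space imposes additional restrictions.'' The counterexamples are valid as trajectories, but a skeptic may ask whether they are attainable by actual EEG states through the windowed estimator. I would address this by realising each $C(t)$ as the population PLV of a windowed phase process whose phase-difference concentration is switched piecewise-constantly. The PLV of a von Mises law is a continuous, surjective function of the concentration onto $[0,1)$, so any bounded target level in $[0,1)$ is attainable. If the admissible space forbids abrupt switching (e.g. smoothness constraints), I would replace the step changes with smooth bumps of equal integral. Mean preservation is linear, so it survives this smoothing, while the variance and dwell-time differences remain strictly nonzero.
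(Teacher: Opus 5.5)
Your proposal is correct and follows essentially the paper's own route: the paper also argues by explicit construction, setting $C_2=C_1+h$ for a zero-mean admissible perturbation $h$, and contrasting equal-occupancy binary sequences such as $0101\ldots$ and $0011\ldots$ (Case 5), which are precisely your $C_2$ and $C_3$. Your additions are sound refinements the paper omits: the requirement $T\ge 4$, time reversal to isolate ordering, and the von Mises realisation via the continuous surjection $\kappa\mapsto I_1(\kappa)/I_0(\kappa)$ onto $[0,1)$. Just note that $m$ must lie in the interior of the measure's range, since a boundary mean such as $\bar C=0$ for PLV forces a constant trajectory; one non-singleton fiber already suffices for non-identifiability.
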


This point is especially important when a single dynamic metric is used as a biomarker. ``More variable'' or ``more state-like'' is not a mathematical property of a mean connectivity value; it requires explicit temporal information.

\subsection{Sliding windows as a family of functionals}
A dynamic estimate can be represented by a family of window operators $\{\mathcal W_t\}$ and connectivity functionals,
\begin{equation}
C(t)=F\big(\mathcal W_t[\mathbf y]\big).
\end{equation}
Changing window length, overlap, tapering, or state-estimation method changes the operator $\mathcal W_t$ and hence changes the dynamic functional. A state sequence obtained with a 2-s window is not mathematically the same observable as a state sequence obtained with a 10-s window even when both are described as ``dynamic connectivity.'' The recent EEG/MEG dynamic-connectivity literature emphasizes these methodological degrees of freedom \cite{Coelli2025,Preti2017,Lurie2020,Allen2014,dePasquale2010,Baker2014,Vidaurre2017}.

\begin{proposition}[Mean-preserving dynamic counterexample]
Let $C(t)$ be any bounded scalar connectivity trajectory on $[0,T]$. There exist infinitely many trajectories with the same mean
\begin{equation}
\bar C=T^{-1}\int_0^T C(t)\,dt
\end{equation}
but different variance, autocorrelation, number of state transitions, and dwell-time distributions. Therefore the time-average of a dynamic connectivity measure cannot in general identify those temporal features.
\end{proposition}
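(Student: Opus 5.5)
The argument is constructive: starting from the given trajectory $C(t)$, we build an explicit one-parameter family of perturbations that leave the mean unchanged but alter each listed temporal feature. Then we invoke the non-identifiability criterion of Section~III: two admissible states with equal source value and different target values suffice. Here the state space is the space of bounded trajectories, and the source functional is the time average.

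Fix a bounded measurable function $h$ on $[0,T]$ with $\int_0^T h(t)\,dt=0$ and $h$ not identically zero. Concretely, take $h(t)=\sin(2\pi k t/T)$ for an integer $k\geq1$, or a zero-mean square wave that switches sign $2k$ times. For $\alpha\in\R$ and $k\in\mathbb N$ define
\begin{equation}
C_{\alpha,k}(t)=C(t)+\alpha\,h_k(t).
\end{equation}
Each $C_{\alpha,k}$ is bounded, and by linearity of the integral $T^{-1}\int_0^T C_{\alpha,k}\,dt=\bar C$. The values of $(\alpha,k)$ range over an infinite set, so it remains to show that the targets vary.

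\textbf{Variance.} We have
\begin{equation}
\Var(C_{\alpha,k})=\Var(C)+2\alpha\,\Cov(C,h_k)+\alpha^2\Var(h_k),
\end{equation}
where variances and covariances are time averages. Since $\Var(h_k)>0$, this is a nonconstant quadratic in $\alpha$. It therefore takes infinitely many distinct values, and these sit on the same mean fiber.

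\textbf{Autocorrelation.} For fixed small $\alpha$, varying $k$ changes the dominant oscillation frequency. The lag-$\tau$ autocovariance of $h_k$ is approximately $\tfrac12\cos(2\pi k\tau/T)$, so for a suitable lag the autocorrelation differs across $k$ once $\alpha$ is large enough that the $h_k$ term dominates.

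\textbf{Transitions and dwell times.} These require a state definition, which we fix as thresholding at a level $\tau$ (for example $\tau=\bar C$). Take $\alpha>2\sup_t|C(t)|+|\tau|$ and $h_k$ the $\pm1$ square wave. Then $C_{\alpha,k}-\tau$ has the sign of $h_k$ everywhere. Consequently the state sequence has exactly $2k-1$ transitions, and all dwell times equal $T/(2k)$. Distinct $k$ thus give distinct transition counts and distinct dwell-time distributions, all at the common mean $\bar C$.

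The main subtlety is this last step. The statement leaves ``state transitions'' and ``dwell time'' undefined, so the proof must commit to a state model. Thresholding is the simplest choice, and for clustering-based state models the same large-$\alpha$ domination argument should carry over. Making $\alpha$ large enough to dominate $C$ is what avoids any case analysis on the shape of the original trajectory. If the admissible space imposes an amplitude bound, such as $|C|\leq1$ for normalized measures, the construction must be adjusted. In that case replace $C$ by the constant $\bar C$ plus a square wave of amplitude $\min(\bar C+1,1-\bar C)$, which is nonzero unless $|\bar C|=1$. The degenerate case $|\bar C|=1$ forces the trajectory to be constant and would need to be excluded explicitly.
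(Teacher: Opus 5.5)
Your proposal is correct and uses the same construction as the paper: add a zero-mean perturbation $h$ to $C$, which leaves $\bar C$ unchanged. It goes further than the paper's one-line argument by checking each listed feature (the variance is a nonconstant quadratic in $\alpha$, and a dominating $\pm1$ square wave fixes the thresholded state sequence), by flagging the degenerate bounded case $|\bar C|=1$ that the paper's phrase ``satisfying the amplitude constraints'' silently assumes away, and it has one small slip: a zero-mean square wave with $2k$ equal segments switches sign $2k-1$ times, as in your later transition count, not $2k$ times.
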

\noindent\textit{Construction:} choose any zero-mean perturbation $h(t)$ satisfying the amplitude constraints of the state space and set $C_2(t)=C_1(t)+h(t)$. Then $\bar C_1=\bar C_2$ while temporal second-order structure changes with the choice of $h$.

\subsection{State-based dynamic connectivity}
When dynamic connectivity is summarized by discrete states $S_t\in\{1,\ldots,K\}$, the occupancy vector
\begin{equation}
\boldsymbol\pi=(\pi_1,\ldots,\pi_K)
\end{equation}
does not determine the transition matrix $\mathbf P$ or state ordering. Two processes can have identical occupancy but entirely different transition structures. A review or application that reports only occupancy therefore cannot legitimately infer switching rate or temporal sequencing without additional information. Table~\ref{tab:dynamic} summarizes the corresponding information loss for common dynamic-connectivity summaries.

\begin{table*}[t]
\caption{Dynamic connectivity summaries and information they discard.}
\label{tab:dynamic}
\centering
\scriptsize
\begin{tabularx}{\textwidth}{p{2.7cm} p{3.5cm} X}
\toprule
Summary & Retains & Does not determine \\
\midrule
Time mean & average edge strength & variance, ordering, dwell times, transition probabilities \\
Time variance & dispersion around the mean & ordering, state occupancy, mean unless supplied \\
Occupancy vector & fraction of time in each state & transition matrix and sequence order \\
Transition matrix & one-step switching probabilities & dwell distributions beyond the Markov assumption and state geometry \\
State centroids & representative connectivity patterns & occupancy, transitions, within-state dispersion \\
Entropy of state sequence & temporal unpredictability & specific states or direction of transitions \\
\bottomrule
\end{tabularx}
\end{table*}

\section{Connectivity Matrices and Graph Measures}
\label{sec:graph}
Let $\mathbf W\in\R^{N\times N}$ denote a connectivity matrix. A graph metric is a map $G(\mathbf W)$ from a high-dimensional matrix to a scalar or low-dimensional summary. The compression is severe: an undirected weighted matrix contains $N(N-1)/2$ edge values, whereas mean degree, clustering, efficiency, or modularity can each be a single number \cite{RubinovSporns2010,BullmoreSporns2009,StamReijneveld2007,StamStraaten2012}.

\subsection{Dimension and information loss}
For any scalar graph metric $g(\mathbf W)$, the level set
\begin{equation}
\{\mathbf W:g(\mathbf W)=g_0\}
\end{equation}
is typically high-dimensional. Therefore one graph statistic will generally not identify another. This remains true even when both graph measures are highly correlated on a particular empirical ensemble.

\subsection{Thresholding and density}
Thresholding is itself a state transformation. The compression from the edge matrix to scalar graph summaries is illustrated in Fig.~\ref{fig:graph_compression}, while the broader state-space picture is shown in Fig.~\ref{fig:fiber_shrinkage}.

Thresholding is itself a state transformation:
\begin{equation}
W_{ij}^{(\tau)}=\mathbf 1\{W_{ij}>\tau\}
\end{equation}
for binary thresholding. Since the transformation can discard edge magnitude and change graph density, a graph metric computed after thresholding is not generally determined by the unthresholded metric or vice versa. The dependence of graph results on node choice, density, thresholding, and weight definition is a recurring theme in network neuroscience \cite{RubinovSporns2010,HeEvans2010,Farahani2019,VanWijk2010,Zalesky2010}.

\subsection{Network-theoretic foundations}
Small-worldness, efficiency, modularity, path length, centrality, and community structure are defined on network objects rather than raw time series \cite{WattsStrogatz1998,LatoraMarchiori2001,Blondel2008,Newman2003,BullmoreSporns2009,Sporns2018}. Consequently, cross-measure identifiability becomes a composition problem:
\begin{equation}
C_i(\mathbf Y)\longrightarrow\mathbf W\longrightarrow G(\mathbf W).
\end{equation}
Every stage can discard information. The source connectivity measure need not even uniquely identify the matrix on which a graph measure is subsequently computed.

\begin{figure}[!t]
\centering
\begin{tikzpicture}[node distance=5mm and 9mm,>=Latex,every node/.style={font=\scriptsize}]
\node[draw,rounded corners,minimum width=2.0cm,minimum height=0.7cm] (signals) {EEG signals};
\node[draw,rounded corners,minimum width=2.0cm,minimum height=0.7cm,right=of signals] (matrix) {Connectivity matrix $\mathbf W$};
\node[draw,rounded corners,minimum width=1.7cm,minimum height=0.7cm,right=of matrix] (metric1) {Metric $G_1$};
\node[draw,rounded corners,minimum width=1.7cm,minimum height=0.7cm,below=of metric1] (metric2) {Metric $G_2$};
\draw[->] (signals)--(matrix)--(metric1);
\draw[->] (matrix)--(metric2);
\node[align=center,text width=2.1cm,below=of matrix] {Many matrices can share\\the same scalar metric};
\end{tikzpicture}
\caption{Graph-level compression. A scalar network metric is a many-to-one map of the connectivity matrix. Consequently, one graph summary generally does not identify another graph summary unless the admissible matrix class supplies additional constraints.}
\label{fig:graph_compression}
\end{figure}

\subsection{Graph measures are second-order summaries of a connectivity matrix}
Let $\mathbf W\in\R^{N\times N}$ be a weighted adjacency matrix derived from a connectivity measure. A graph functional $G(\mathbf W)$ may be global efficiency, modularity, clustering coefficient, characteristic path length, participation coefficient, degree, strength, or another topological summary \cite{RubinovSporns2010,StamReijneveld2007,BullmoreSporns2009,BassettSporns2017,Farahani2019}. Because $G$ maps a matrix into a scalar or a much smaller set of scalars, its fibers are typically high-dimensional.

For example, mean degree is invariant to many redistributions of edge weight. Two weighted networks can therefore have the same mean degree but different modularity and clustering. Likewise, thresholded graphs can have identical density but different connectedness or community structure. A graph summary should consequently be interpreted as an additional compression layer, not as a substitute for the edge matrix.

\begin{proposition}[Graph-summary non-identifiability]
Let $G_1$ and $G_2$ be two nontrivial graph functionals on a class of graphs containing at least two non-isomorphic graphs with equal $G_1$ values but unequal $G_2$ values. Then $G_2$ is not identifiable from $G_1$ over that graph class. In particular, no graph-theoretic identity can be assumed solely because both metrics are calculated from the same connectivity matrix.
\end{proposition}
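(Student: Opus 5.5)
The argument reduces the statement to the fiber criterion (Proposition~1), applied with the admissible state space taken to be the stated graph class. Set $C_i=G_1$ and $C_j=G_2$. By hypothesis there are graphs $\mathbf W_1,\mathbf W_2$ in the class with $G_1(\mathbf W_1)=G_1(\mathbf W_2)=g_0$ and $G_2(\mathbf W_1)\neq G_2(\mathbf W_2)$. Both graphs therefore lie in the fiber $\Fset_{G_1}(g_0)$.

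Their images give two distinct elements of the attainable set, so $\Aset_{G_2|G_1}(g_0)\supseteq\{G_2(\mathbf W_1),G_2(\mathbf W_2)\}$ is not a singleton. By the definition of exact determination, $G_2$ is not exactly determined by $G_1$. Equivalently, $G_2$ is not constant on the fiber of $G_1$ at $g_0$. By the implication (3)$\Rightarrow$(2) of Proposition~1, no function $g$ with $G_2=g\circ G_1$ on the class can exist: any such $g$ would need $g(g_0)$ to take two different values. Hence $G_1\not\succeq G_2$ in the information order. This is exactly the constructive-counterexample scheme stated in the section on non-identifiability.

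The second sentence of the statement is a remark rather than a separate claim. Both metrics factor through the same matrix $\mathbf W$, so each is a function of $\mathbf W$. However, Proposition~2 shows that the relation $\succeq$ has to be checked directly between $G_1$ and $G_2$; neither being a function of $\mathbf W$ forces one to factor through the other. The witness pair above shows precisely this failure.

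The hypothesis is not vacuous. For binary undirected graphs on $N=4$ nodes, take $G_1$ to be mean degree and $G_2$ to be the global clustering coefficient. The path $P_4$ and the star $K_{1,3}$ both have three edges, hence equal mean degree, and both have clustering $0$, so they do not separate. Instead, compare the triangle plus an isolated vertex with $P_4$: both have three edges and mean degree $3/2$, but the triangle graph has clustering $1$ while $P_4$ has clustering $0$. These two graphs are non-isomorphic.

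The only point needing care is the role of "non-isomorphic". Most graph functionals are isomorphism-invariant, so two isomorphic graphs would necessarily share every $G_2$ value. The hypothesis of unequal $G_2$ values therefore already forces non-isomorphism, and no separate argument for it is required.
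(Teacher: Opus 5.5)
Your proof is correct and is essentially the paper's own argument: the hypothesis already supplies the equal-$G_1$/unequal-$G_2$ pair required by the paper's constructive-counterexample criterion, so $\Aset_{G_2|G_1}(g_0)$ is not a singleton and no $g$ with $G_2=g\circ G_1$ exists. One small quibble: Proposition~2 only asserts that $\succeq$ is a preorder, so it does not by itself ``show'' that both metrics factoring through $\mathbf W$ fails to give $G_1\succeq G_2$; it is your witness pair that establishes this.
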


\subsection{Weighted graphs and thresholding}
Weighted graph measures retain more information than their thresholded counterparts only if the weighting scheme itself is retained. Let
\begin{equation}
W^{(\tau)}_{ij}=\mathbf 1(W_{ij}>\tau).
\end{equation}
Thresholding is many-to-one: every interval of edge weights maps to the same binary value. Therefore a graph metric computed after thresholding can be insensitive to large changes in the underlying continuous connectivity matrix until an edge crosses $\tau$, after which the metric can change discontinuously. Density-preserving threshold schemes add another constraint but still do not restore the lost weight information \cite{VanWijk2010,Zalesky2010}.

\subsection{Network comparison must account for topology and weight scale}
Comparing graph metrics across conditions with different network density can produce artifacts because many metrics depend on node number, density, or weight distribution. The mathematical framework treats density as an explicit conditioning variable when required. A claim that a change in global efficiency is caused by a change in ``connectivity strength'' is therefore incomplete unless the admissible state space controls the weight-to-topology transformation.

\begin{figure*}[t]
\centering
\begin{tikzpicture}[node distance=7mm and 15mm, >=Latex, every node/.style={font=\small}]
\node[draw,rounded corners,minimum width=3cm,align=center] (a) {Source observation $c_i$};
\node[draw,rounded corners,minimum width=3cm,align=center,right=of a] (b) {Fiber width $\operatorname{diam}(\Fset_i)$};
\node[draw,rounded corners,minimum width=3cm,align=center,right=of b] (c) {Target ambiguity $\operatorname{diam}(\Aset_{j|i})$};
\node[draw,rounded corners,minimum width=3cm,align=center,below=of b] (d) {Add measurement $Z$};
\node[draw,rounded corners,minimum width=3cm,align=center,below=of c] (e) {Reduced attainable set $\Aset_{j|i,z}$};
\draw[->] (a)--(b)--(c);
\draw[->] (d)--(e);
\draw[->,dashed] (b)--(d);
\draw[->,dashed] (e)--(c);
\end{tikzpicture}
\caption{Conceptual relation between hidden state-space width and target ambiguity. A source statistic collapses many admissible states into a fiber. Additional measurements or assumptions reduce that fiber, and the target attainable set can shrink from a broad region to a bounded interval or singleton.}
\label{fig:fiber_shrinkage}
\end{figure*}

\section{Finite-Sample Estimation and Uncertainty}
\label{sec:uncertainty}
The framework distinguishes two layers of uncertainty. First, there is \emph{population identifiability}: whether the target functional is determined by the source functional in the idealized state space. Second, there is \emph{estimation uncertainty}: whether a finite EEG record permits the relevant functionals to be estimated accurately.

Let $\widehat C_i$ and $\widehat C_j$ denote finite-sample estimators. Even if $C_j=g(C_i)$ exactly, generally
\begin{equation}
\widehat C_j\neq g(\widehat C_i)
\end{equation}
due to estimator noise, bias, and differences in effective sample size. Conversely, an empirical near-equality between two estimators may arise from correlated estimation noise even when no population identity exists.

\subsection{Sample size and estimator bias}
Phase metrics can have finite-sample bias, motivating PPC and related corrections \cite{Vinck2010,Vinck2011}. Information-theoretic measures can require substantial sample sizes and are sensitive to the estimator and dimensionality \cite{Kraskov2004,Paninski2003}. Directed models introduce model-order selection and parameter-estimation uncertainty \cite{BarnettSeth2014,BarnettSeth2015}. Reliability should therefore be treated separately from statistical significance. Evidence from large-scale reliability syntheses indicates that functional connectivity estimates can exhibit substantial measurement variability even when group-level effects are detectable \cite{Noble2019,Colclough2016,Holler2017}.

\subsection{Surrogate and permutation inference}
Surrogate testing and permutation procedures are widely used to assess connectivity effects without relying solely on parametric assumptions \cite{Lachaux1999,NicholsHolmes2002,MarisOostenveld2007}. However, significance of a source statistic does not establish identifiability of a target statistic. Null-hypothesis testing and cross-measure inference are logically distinct questions.

\subsection{Multiplicity}
Connectivity analyses often involve many channel pairs, frequencies, conditions, or graph nodes. False discovery rate control and related multiple-comparison procedures remain essential when converting pairwise significance tests into network-level conclusions \cite{BenjaminiHochberg1995}. The attainable-set framework is complementary: multiplicity controls the probability of false positives, whereas identifiability specifies what can be logically inferred from the value of a statistic.

\subsection{Population identifiability versus estimator identifiability}
Let $C_i$ and $C_j$ be population functionals and $\widehat C_i,\widehat C_j$ their finite-sample estimators. Population exactness means
\begin{equation}
C_j=g(C_i)
\end{equation}
on the admissible population state space. This does not imply that the finite-sample estimators satisfy $\widehat C_j=g(\widehat C_i)$ exactly. Bias, variance, regularization, discretization, surrogate choices, and numerical tolerances can all intervene.

Conversely, an empirical regression may produce an accurate estimator of $C_j$ from $\widehat C_i$ on one data distribution even when population exactness fails. Such a regression is a predictive model, not a proof of identifiability.

\subsection{Estimator uncertainty as a second attainable set}
Suppose the estimator can be represented as
\begin{equation}
\widehat C_i=C_i(x)+\epsilon_i,
\qquad
\widehat C_j=C_j(x)+\epsilon_j,
\end{equation}
with an uncertainty model for $(\epsilon_i,\epsilon_j)$. The observed-source event $\widehat C_i=c$ no longer corresponds to a single population fiber but to a weighted neighborhood of fibers. A practical uncertainty-aware attainable set can therefore be written schematically as
\begin{equation}
Q_{j|i}(u;c)=\Pr(\widehat C_j\leq u\mid\widehat C_i=c,\ x\in\Xspace),
\end{equation}
followed by the uncertainty-aware set
\begin{equation}
\Aset^{\mathrm{est}}_{j|i}(c;\alpha)
=\{u:Q_{j|i}(u;c)\text{ meets the chosen }\alpha\text{-criterion}\}.
\end{equation}
where the exact construction depends on the sampling model. The purpose is not to impose a single universal inferential recipe, but to make explicit that finite-sample uncertainty wraps around the logical identifiability problem.

\subsection{Permutation and surrogate inference}
Connectivity analyses frequently use surrogate or permutation procedures. Such methods are valuable for testing null hypotheses and controlling false-positive rates \cite{MarisOostenveld2007,NicholsHolmes2002,Winkler2014}. However, a significance test answers whether an observed statistic is unusual under a null; it does not answer whether a second connectivity functional is uniquely determined by the observed statistic. Identifiability and hypothesis testing should therefore be reported separately.

\subsection{Multiplicity and selection}
Connectivity studies often evaluate many channel pairs, frequencies, conditions, and graph metrics. The number of statistical tests can become very large, motivating error-control procedures such as the false discovery rate \cite{BenjaminiHochberg1995,BenjaminiYekutieli2001,Storey2002}. Multiple-testing correction changes the inferential threshold, not the underlying mapping $C_i\mapsto C_j$. A statistically significant source-target association can still be non-identifying, while an identifying mathematical relation may hold regardless of the multiplicity procedure.

\subsection{Reliability is orthogonal to identifiability}
A connectivity measure can be highly reliable yet weakly informative about another measure, or noisy yet theoretically sufficient under a strict model. Test--retest reliability and identifiability therefore describe different properties \cite{Noble2019,Holler2017,Colclough2016}. Reliability asks whether the same functional is reproducible; identifiability asks whether one functional determines another over the admissible state space. Table~\ref{tab:uncertainty} separates these layers from one another.

\begin{table*}[t]
\caption{Three distinct sources of uncertainty in cross-connectivity inference.}
\label{tab:uncertainty}
\centering
\scriptsize
\begin{tabularx}{\textwidth}{p{2.6cm} p{4.0cm} X}
\toprule
Layer & Question & Example consequence \\
\midrule
Population non-identifiability & Which target values are compatible with the source value? & non-singleton attainable set even with infinite data \\
Sampling uncertainty & How far can an estimator deviate from its population functional? & confidence intervals, bias, sample-size dependence \\
Pipeline uncertainty & How much does the mapping change with reference/filter/source choices? & different attainable sets for different preprocessing pipelines \\
\bottomrule
\end{tabularx}
\end{table*}

\section{Worked Analytical Case Studies}
\label{sec:cases}

\subsection{Case 1: covariance $\rightarrow$ correlation}
Fix $c=\Cov(X,Y)$ and known $\sigma_X,\sigma_Y$. The attainable set is the singleton
\begin{equation}
\Aset_{r|\mathrm{cov}}(c)=\left\{\frac{c}{\sigma_X\sigma_Y}\right\}.
\end{equation}
If the scales are not fixed but satisfy $\sigma_X\in[a_1,a_2]$ and $\sigma_Y\in[b_1,b_2]$, with positive bounds, then
\begin{equation}
\Aset_{r|\mathrm{cov}}(c)
=\left[\frac{c}{a_2b_2},\frac{c}{a_1b_1}\right]
\end{equation}
for $c\geq0$, with the endpoints reversed when $c<0$. The interval is an analytical bound induced by explicit scale bounds.

\subsection{Case 2: correlation $\rightarrow$ Gaussian MI}
For $\rho\in(-1,1)$ under the bivariate Gaussian model,
\begin{equation}
\Aset_{\mathrm{MI}|r}(\rho)=\left\{-\frac12\log(1-\rho^2)\right\}.
\end{equation}
Thus the attainable set collapses to one point only because the admissible distribution class has been restricted.

\subsection{Case 3: PLV $\rightarrow$ coherence}
For constant amplitudes, the attainable set is a singleton at the population level. With variable amplitudes, two phase distributions can yield the same PLV while amplitude-phase covariance changes the coherency magnitude. Hence the unrestricted attainable set is generally multi-valued.

\subsection{Case 4: PLI $\rightarrow$ wPLI}
The constructive example in Section~VI gives
\begin{equation}
\begin{aligned}
\PLV_1&=\PLV_2=0.5,\\
\wPLI_1&=0.5,\qquad \wPLI_2\approx0.935.
\end{aligned}
\end{equation}
Thus the attainable set of wPLI conditioned on PLI $=0.5$ contains at least two values. This is a proof of non-identifiability, not merely an intuition.

\subsection{Case 5: mean dynamic FC $\rightarrow$ variability}
Consider two binary-state trajectories with equal dwell times but different ordering, for example $0101\ldots$ and $0011\ldots$. Their means can be equal while their autocorrelation, transition count, and dwell-time statistics differ. A static mean therefore cannot determine temporal organization without further constraints.

\subsection{Case 6: sensor connectivity $\rightarrow$ source connectivity}
Let $\mathbf y=\mathbf L\mathbf s$ with full-rank or rank-deficient $\mathbf L$. Unless $\mathbf L$ and an appropriate inverse are known and uniquely identified, many source configurations can lead to the same sensor data, and source connectivity is consequently not determined by sensor connectivity alone. This is a direct consequence of the inverse problem \cite{Michel2004,Grech2008,PascualMarqui1994}.

\subsection{Case study: correlation to mutual information under three state spaces}
Consider a fixed correlation value $\rho_0$. Define three nested state spaces: $\Xspace_1$ containing jointly Gaussian pairs; $\Xspace_2$ containing a broader class of elliptically symmetric distributions; and $\Xspace_3$ containing arbitrary continuous bivariate distributions satisfying finite second moments. On $\Xspace_1$, mutual information is exactly
\begin{equation}
\MI=-\tfrac12\log(1-\rho_0^2).
\end{equation}
On larger spaces, the fiber generally contains multiple joint distributions with the same $\rho_0$ but different mutual information. The important result is not simply ``MI and correlation are different.'' It is that the strength of inference depends monotonically on the assumptions defining $\Xspace$.

\subsection{Case study: PLI and wPLI}
The constructive example in Section~VI proves that the PLI value $0.50$ can coexist with wPLI values near $0.50$ and $0.94$. Hence a reported PLI value should not be converted to a unique wPLI value. A bounded conversion may still be possible if the imaginary cross-spectrum magnitude is known to lie in a constrained range. Let $0<m_{\min}\leq |\Im Z|\leq m_{\max}$. Then the sign-weighted mean is bounded by linear-fractional optimization over the sign probabilities and magnitude interval. The precise bound depends on the additional constraints; the framework deliberately avoids asserting a universal formula where none has been proved.

\subsection{Case study: static connectivity to a dynamic trajectory}
Suppose only $\bar C=0.4$ is reported for a dynamic edge. One admissible trajectory can be constant at $0.4$. Another can alternate between $0.1$ and $0.7$ for equal durations. Both have identical means, but the second has nonzero variance and repeated transitions. A third trajectory can remain at $0.7$ for 40\% of the time and at $0.2$ for 60\%, again matching the same mean. Without additional temporal constraints, the mean therefore identifies neither variance nor transition pattern.

\subsection{Case study: connectivity matrix to graph modularity}
Let $\mathbf W$ be a weighted matrix. A scalar modularity score $Q$ defines a high-dimensional level set. Redistributing edge weights within communities can preserve $Q$ while altering individual edge strengths. Conversely, two matrices can have similar average edge weights but markedly different modularity because the placement of those weights changes community structure. Therefore an edge-level measure cannot in general be reconstructed from a global modularity value and vice versa. The inference direction has to be specified before any ``network conversion'' is discussed.

\subsection{Case study: sensor reference as conditioning information}
Suppose a study reports a correlation or coherence computed after an average reference. Reusing that value as if it were obtained from a linked-mastoid or REST-style reference changes the problem because the reference operator is part of the observable. In the proposed notation, one should compare $C_i^{(R_1)}$ with $C_j^{(R_1)}$, or explicitly model the transformation from $R_1$ to $R_2$. Treating $R$ as a nuisance-free label silently assumes that the connectivity functional is invariant to reference transformation, an assumption that does not generally hold \cite{Nunez1997,Nunez1997,Yao2001,Trujillo2017,Rios2019}.

\subsection{Case study: source leakage correction}
Consider source signals $\mathbf s$ reconstructed from sensors and a leakage correction operator $\mathcal O$. A source connectivity estimate based on $C(\mathbf s)$ and one based on $C(\mathcal O\mathbf s)$ answer different questions. If the scientific target is a coupling measure designed to suppress zero-lag mixing, it is mathematically appropriate to define that target as $C\circ\mathcal O$. It is not appropriate to compare the corrected result with an uncorrected result as though the two were noisy estimates of a single latent scalar. Leakage-correction studies demonstrate that correction choices also interact with reliability \cite{Colclough2015,Colclough2016,Brookes2012,SanchezBornot2018}.

\subsection{Case study: conditional directed connectivity}
Suppose pairwise GC from $X$ to $Y$ is reported. A third process $Z$ can serve as a common driver or mediator. Conditional GC incorporates $Z$ and can therefore differ from pairwise GC. The source pairwise GC is not sufficient to reconstruct the conditional result because the omitted interactions with $Z$ are hidden coordinates of the state. This is the directed analogue of the marginal-to-partial correlation problem and motivates explicit multivariate modeling \cite{Geweke1982,KaminskiBlinowska2001,BarnettSeth2014,BarnettSeth2015}.

\section{A Unified Decision Framework for Cross-Measure Inference}
\label{sec:decision}
A practical cross-measure question can be organized into the following sequence. First, define the source and target functionals exactly. Second, define the admissible state space. Third, identify variables discarded by the source measure but required by the target. Fourth, determine whether the target is constant over each source fiber. If not, impose realistic bounds or supply a minimal augmentation. Finally, distinguish population-level identifiability from finite-sample estimation. Table~\ref{tab:decision} provides a compact checklist for these steps, and Fig.~\ref{fig:decision} gives the corresponding flow.

\begin{table*}[t]
\caption{Decision framework for a claim that one EEG connectivity measure determines another.}
\label{tab:decision}
\centering
\scriptsize
\begin{tabularx}{\textwidth}{p{1.1cm} p{3.2cm} p{5.0cm} X}
\toprule
Step & Question & Mathematical test & Interpretation \\
\midrule
1 & What are the exact definitions? & Specify $C_i$ and $C_j$ including normalization and conditioning & Prevents accidental comparison of non-equivalent variants \\
2 & What is admissible? & Define $\Xspace$ and nuisance parameter ranges & Makes assumptions visible \\
3 & What does the source discard? & Characterize fibers $\Fset_i(c)$ & Reveals hidden degrees of freedom \\
4 & Is there a factorization? & Test whether $C_j=g\circ C_i$ & Singleton attainable sets imply exact inference \\
5 & If not, can bounds be derived? & Optimize $C_j$ over $C_i=c$ and $x\in\Xspace$ & Produces analytical intervals or regions \\
6 & What extra information is sufficient? & Find minimal augmentation $Z$ shrinking fibers to singletons & Converts an empirical caveat into a design requirement \\
7 & What is the estimator uncertainty? & Propagate sampling distributions of $\widehat C_i,\widehat C_j$ & Separates identifiability from estimation \\
8 & Does the claim survive perturbations? & Repeat over references, filters, windows, leakage models & Tests methodological robustness \\
\bottomrule
\end{tabularx}
\end{table*}

\begin{figure*}[t]
\centering
\begin{tikzpicture}[node distance=8mm and 12mm,>=Latex,every node/.style={font=\small}]
\node[draw,diamond,aspect=2,align=center] (def) {Exact definitions\\specified?};
\node[draw,rounded corners,align=center,below=of def] (state) {Define admissible state space\\$\Xspace$};
\node[draw,diamond,aspect=2,align=center,below=of state] (factor) {Target constant\\on source fibers?};
\node[draw,rounded corners,align=center,left=18mm of factor] (exact) {Exact inference\\$C_j=g(C_i)$};
\node[draw,rounded corners,align=center,right=18mm of factor] (bound) {Optimize target over\\the source fiber};
\node[draw,diamond,aspect=2,align=center,below=of bound] (bounded) {Bounded attainable set?};
\node[draw,rounded corners,align=center,below left=7mm and 1mm of bounded] (boundedout) {Report analytical bounds};
\node[draw,rounded corners,align=center,below right=7mm and 1mm of bounded] (nonid) {Report non-identifiability\\and missing information};
\draw[->] (def)--(state);
\draw[->] (state)--(factor);
\draw[->] (factor)--node[above]{yes}(exact);
\draw[->] (factor)--node[above]{no}(bound);
\draw[->] (bound)--(bounded);
\draw[->] (bounded)--node[left]{yes}(boundedout);
\draw[->] (bounded)--node[right]{no}(nonid);
\end{tikzpicture}
\caption{Decision flow for cross-connectivity inference. The framework favors exact factorization when available, analytical bounds when the target attainable set is bounded, and an explicit non-identifiability statement when neither exactness nor useful bounds follow from the specified assumptions.}
\label{fig:decision}
\end{figure*}

\subsection{A reusable claim template}
For a manuscript reporting a cross-measure relationship, the following mathematical sentence is sufficient to make the logical strength explicit:
\begin{quote}
Let $\Xspace_0$ denote the admissible state space defined by [assumptions]. For source measure $C_i$ and target measure $C_j$, the attainable set $\Aset_{j|i}(c_i)$ is [singleton / bounded non-singleton / unbounded]. Therefore $C_j$ is [exactly determined / conditionally determined given $Z$ / bounded by $L$ and $U$ / not identifiable] from $C_i$ over $\Xspace_0$.
\end{quote}
This statement is compact enough for application papers while retaining the mathematical assumptions that determine whether a conversion is valid.

\subsection{Study-design implication: measure what closes the fiber}
Once a non-identifiable source-target pair has been identified, the framework suggests a principled way to select additional measurements. Rather than adding variables opportunistically, investigators can ask which candidate variable $Z_k$ produces the greatest expected reduction in attainable-set diameter,
\begin{equation}
\Delta_k=\mathbb E\left[\operatorname{diam}(\Aset_{j|i}(C_i))-\operatorname{diam}(\Aset_{j|i,Z_k}(C_i,Z_k))\right].
\end{equation}
The quantity can be estimated under a chosen probabilistic design or evaluated analytically for restricted signal families. It is not a universal objective function, but it provides a mathematical language for experiment design: choose measurements that close the degrees of freedom most responsible for the target ambiguity.

\section{Critical Synthesis of the EEG Connectivity Literature}
\label{sec:synthesis}

\subsection{Taxonomy alone is insufficient}
A conventional taxonomy tells the reader that coherence is spectral, PLV is phase-based, MI is information-theoretic, and graph measures are network-level. The present framework asks a stronger question: what information survives each transformation, and can the surviving information support an inference about a second transformation? This shift changes the unit of comparison from the name of the measure to its equivalence classes on state space.

\subsection{Universal relationships versus empirical relationships}
The literature contains at least four logically different kinds of relationships. The first is a universal mathematical identity valid over the declared state space. The second is a conditional identity valid after distributional or signal-model assumptions are imposed. The third is a statistical association observed on a dataset or benchmark. The fourth is a predictive mapping learned from data. Confusing these classes is a major source of overstatement. This hierarchy is summarized in Table~\ref{tab:claimhierarchy}.

\begin{table*}[t]
\caption{Hierarchy of cross-measure claims.}
\label{tab:claimhierarchy}
\centering
\scriptsize
\begin{tabularx}{\textwidth}{p{2.5cm} p{3.2cm} X X}
\toprule
Claim type & Formal statement & Evidence required & Typical wording \\
\midrule
Universal identity & $C_j=g(C_i)$ on all admissible states & mathematical derivation & ``is determined by'' \\
Conditional identity & $C_j=g(C_i;z)$ under assumptions & derivation + explicit state restrictions & ``under the Gaussian / constant-amplitude model'' \\
Bounded inference & $L(C_i)\leq C_j\leq U(C_i)$ & analytical optimization or proof of bounds & ``is bounded by'' \\
Empirical association & $C_j\approx g(C_i)$ on a sample & validation dataset + uncertainty analysis & ``is associated with'' \\
Predictive model & $\widehat C_j=\widehat g(C_i,\mathbf z)$ & out-of-sample validation & ``can predict'' \\
Non-identifiability & $\exists x_1,x_2:C_i(x_1)=C_i(x_2),\ C_j(x_1)\neq C_j(x_2)$ & counterexample or construction & ``cannot be uniquely inferred'' \\
\bottomrule
\end{tabularx}
\end{table*}

\begin{table*}[t]
\caption{Evidence layers represented in the connectivity literature and their role in the present framework.}
\label{tab:evidence_map}
\centering
\scriptsize
\begin{tabularx}{\textwidth}{p{2.6cm} p{5.5cm} X}
\toprule
Evidence layer & Representative literature & Role in the mathematical framework \\
\midrule
Definitions and conceptual foundations & Friston (functional/effective connectivity), Horwitz, Varela et al. \cite{Friston1994,Horwitz2003,Varela2001} & establishes the distinction between statistical dependence and directed/biological interpretations \\
Connectivity tutorials and reviews & Pereda et al., Sakkalis et al., Bastos and Schoffelen, Chiarion et al. \cite{Pereda2005,Sakkalis2011,BastosSchoffelen2016,Chiarion2023} & defines estimator families and recurring interpretational issues \\
Phase synchronization & Lachaux et al., Stam et al., Vinck et al., Aydore et al., Bruña et al. \cite{Lachaux1999,Stam2007,Vinck2010,Vinck2011,Aydore2013,Bruna2018} & supplies the phase-based functionals used for constructive identifiability analysis \\
Spectral connectivity & Nunez et al., Nolte et al., Schoffelen and Gross, Pascual-Marqui et al. \cite{Nunez1997,Nunez1997,Nolte2004,SchoffelenGross2009,PascualMarqui2011} & formalizes frequency-domain dependence and source-space considerations \\
Directed connectivity & Granger, Geweke, Kamiński and Blinowska, Baccalá and Sameshima, Barnett and Seth, Vicente et al. \cite{Granger1969,Geweke1982,KaminskiBlinowska2001,BaccalaSameshima2001,BarnettSeth2014,BarnettSeth2015,Vicente2011} & establishes model-conditional relationships and their assumptions \\
Cross-frequency coupling & Canolty, Canolty and Knight, Tort, Aru, Hyafil \cite{Canolty2006,CanoltyKnight2010,Tort2008,Tort2010,Aru2015,Hyafil2017} & motivates waveform- and estimator-dependent attainable sets \\
Leakage, reference, and preprocessing & Yao, Nolte, Brookes, Colclough, Rios, Widmann, Bigdely-Shamlo, Pion-Tonachini \cite{Yao2001,Nolte2004,Brookes2012,Colclough2015,Rios2019,Widmann2015,BigdelyShamlo2015,PionTonachini2019} & places processing operators inside the observation map \\
Dynamic connectivity & de Pasquale, Allen, Preti, Calhoun, Vidaurre, Coelli \cite{dePasquale2010,Allen2014,Preti2017,Calhoun2014,Vidaurre2017,Coelli2025} & extends fibers from scalar observations to trajectories and state sequences \\
Network science & Watts--Strogatz, Newman, Rubinov--Sporns, Bassett--Sporns, Sporns \cite{WattsStrogatz1998,Newman2003,RubinovSporns2010,BassettSporns2017,Sporns2018} & motivates the matrix-to-graph compression problem \\
Reliability and statistical inference & Holler et al., Noble et al., Maris and Oostenveld, Nichols and Holmes, Benjamini and Hochberg \cite{Holler2017,Noble2019,MarisOostenveld2007,NicholsHolmes2002,BenjaminiHochberg1995} & separates reproducibility and significance from population identifiability \\
\bottomrule
\end{tabularx}
\end{table*}

\subsection{Sensor space versus source space}
Sensor-space and source-space connectivity should be understood as different observables linked by a measurement model. Source reconstruction can improve interpretability and spatial specificity, but the inverse problem is not automatically unique \cite{Michel2004,Grech2008,PascualMarqui2011,SchoffelenGross2009}. Leakage correction further changes the target functional. Therefore a statement of the form ``source connectivity equals underlying connectivity'' exceeds what the forward model alone can establish.

\subsection{Static versus dynamic connectivity}
The dynamic literature has introduced rapidly evolving descriptors, including time-varying matrices, state sequences, and transient network models \cite{Preti2017,Coelli2025,Baker2014}. The attainable-set framework indicates that every dynamic summary should be viewed as a functional of a trajectory. A static edge average, dynamic variance, and state dwell-time statistic are different compressions of the same trajectory and should not be treated as interchangeable.

\subsection{Graph metrics and downstream machine learning}
Graph measures and machine-learning features further compress connectivity matrices. This can be scientifically useful but creates another layer of non-identifiability. A classifier trained on one connectivity representation does not establish that another representation contains equivalent information. Recent EEG connectome perspectives underscore the expanding role of network representations in cognitive and clinical neuroscience \cite{ZhangChen2025}. The present framework suggests that transfer between representations should first be assessed as a question of information preservation. The evidence layers used in this review are mapped in Table~\ref{tab:evidence_map}.

\section{Reporting Standards Suggested by the Framework}
\label{sec:reporting}
A connectivity study intended to support a cross-measure inference should report enough information to reconstruct the admissible state space. Table~\ref{tab:reporting} gives the minimum reporting fields proposed by the framework.

\begin{table*}[t]
\caption{Minimum reporting checklist for connectivity studies used in cross-measure comparisons.}
\label{tab:reporting}
\centering
\scriptsize
\begin{tabularx}{\textwidth}{p{3.0cm} X X}
\toprule
Domain & Minimum information & Why it matters for identifiability \\
\midrule
Acquisition & channel count, montage, sampling rate, hardware, recording duration & defines observable dimension and measurement operator \\
Reference & explicit reference scheme and any re-referencing & can change covariance, information, and network topology \\
Filtering & passbands, filter type/order, phase mode, edge handling & changes spectra and phase \\
Artifact processing & ICA/BSS/ASR/SSP method, component selection rules & changes the signal state before connectivity estimation \\
Epoching & window length, overlap, stationarity assumptions & controls estimator variance and dynamic interpretation \\
Connectivity estimator & exact equation, normalization, software implementation & avoids mixing variants with different mappings \\
Frequency analysis & band limits, spectral estimator, tapering, smoothing & affects frequency-domain functionals \\
Source analysis & forward model, inverse method, parcellation, leakage correction & determines source-space observable \\
Statistics & estimator uncertainty, surrogate/permutation design, multiplicity correction & separates inference from identifiability \\
Cross-measure claim & assumptions, conditioning variables, validity range & makes exactness versus approximation explicit \\
\bottomrule
\end{tabularx}
\end{table*}

\subsection{Recommended cross-measure report fields}
For a study whose goal is to compare or convert two connectivity measures, the following fields should be treated as mathematical inputs rather than optional metadata:
\begin{enumerate}[leftmargin=*]
\item the exact population/sample formula for both measures, including normalization and bias correction;
\item the signal representation (sensor, source, parcel, component, band-limited analytic signal, or envelope);
\item reference and montage operators;
\item filter type, cutoff frequencies, order, phase mode, padding, and edge handling;
\item epoch length, overlap, number of independent segments, and stationarity assumptions;
\item source model, inverse method, regularization, parcellation, and leakage correction;
\item conditioning variables for multivariate/directed measures;
\item surrogate generation and statistical correction;
\item the admissible state assumptions used to justify any claimed conversion or bound;
\item whether the statement is intended as an identity, conditional identity, bound, empirical association, or prediction model.
\end{enumerate}

The point is not to burden every EEG paper with abstract mathematics. Instead, these fields provide a minimum specification whenever a result from one connectivity family is used to interpret another. A short paragraph can often make the distinction explicit.

\subsection{A reproducibility principle}
The mathematical object being compared should be reconstructable from the report. A citation to a method name alone is insufficient when multiple implementations differ in normalization, regularization, windowing, or conditioning. A reproducibility-oriented report should therefore include the equations and implementation choices needed to reconstruct $C_i$ and $C_j$ from the same input state \cite{Sakkalis2011,DelormeMakeig2004,Gramfort2014}.

\section{Limitations of the Present Review}
\label{sec:limitations}
The framework is deliberately general, but several limitations should be stated. First, the attainable set is only as meaningful as the admissible state space. If the state restrictions are unrealistically broad, the bounds may become too loose to be useful; if they are unrealistically narrow, an apparent exact identity may be an artifact of model assumptions. Second, many clinically used connectivity estimators are implemented in software packages with undocumented defaults or subtle finite-sample corrections. The manuscript therefore emphasizes functional definitions rather than claiming that a single software implementation represents an entire method family. Third, constructive non-identifiability proofs usually establish impossibility over a model class rather than over every conceivable biophysical generative model. Their conclusions should be matched to the state space under discussion. Fourth, the review is mathematical and methodological rather than a meta-analysis of effect sizes across clinical populations. Application-specific clinical conclusions remain dependent on the original empirical studies. Finally, the bibliography is intentionally curated rather than mechanically padded: the aim is to cite traceable sources that support the mathematical claims and major methodological families, not to create an arbitrary numerical reference target.

\section{Open Mathematical Problems and Future Directions}
\label{sec:future}
Several mathematical extensions follow naturally.

First, many attainable-set calculations are high-dimensional constrained optimization problems. Convex relaxations, moment methods, semidefinite programming, and optimal-transport formulations may provide scalable bounds when the state constraints admit tractable structure.

Second, finite-sample identifiability should be studied jointly with population identifiability. The population attainable set describes logical inference; estimator distributions describe practical uncertainty. Their combination could yield probabilistic attainable sets or confidence envelopes for cross-measure inference.

Third, the framework should be extended to multivariate and higher-order interactions. Multivariate mutual information, conditional dependence, hypergraph representations, and higher-order phase interactions introduce richer fibers than pairwise measures.

Fourth, dynamic connectivity can be formalized over path spaces rather than scalar intervals. State-space and optimal-transport methods may be useful for characterizing which temporal features are implied by one dynamic summary and which are not.

Fifth, inverse uncertainty could be integrated directly. Rather than conditioning on a single lead field or source-reconstruction solution, the admissible state space could include an ensemble of plausible forward and inverse models. The resulting target attainable set would quantify methodological uncertainty as part of the scientific inference.

Sixth, standardized benchmark suites could be constructed with analytically controllable signal families. A benchmark could score connectivity methods by the structure of their attainable sets, estimator robustness, sensitivity to known nuisances, and cross-measure information preservation, rather than by downstream classification performance alone.

Seventh, the same framework may be useful beyond EEG. MEG, intracranial electrophysiology, calcium imaging, and multimodal fusion all involve transformations from latent activity to statistical summaries. The mathematical question of what one summary identifies about another is therefore broader than any one modality.

\section{Conclusion}
\label{sec:conclusion}
EEG functional connectivity is not one observable but a family of statistical functionals that compress different aspects of neural signals. Consequently, the value of one connectivity measure generally does not determine another. This paper formalized that issue by defining the cross-connectivity attainable set: the set of all target values compatible with a reported source value over an explicitly stated admissible EEG state space.

The framework distinguishes exact determination, conditional determination, analytical bounded inference, approximate identification, and non-identifiability. These distinctions were illustrated analytically for covariance and correlation, Gaussian mutual information and correlation, coherence and PLV, PLI and wPLI, Granger causality and transfer entropy, dynamic connectivity summaries, and graph metrics. The analysis also demonstrates why reference choice, filtering, volume conduction, source leakage, inverse modeling, and thresholding belong inside the mathematical formulation rather than being treated solely as implementation details.

The principal practical implication is straightforward. A cross-measure statement should always answer three questions: what state information was retained by the source measure; what information was discarded; and which assumptions, bounds, or additional measurements make the target recoverable. When an exact factorization exists, it should be stated as such. When only a conditional relationship exists, the conditions should be written explicitly. When the attainable set is bounded but multi-valued, analytical bounds are more defensible than scalar conversion. When no useful bound exists, non-identifiability is the scientifically correct conclusion.

The framework therefore does not seek to identify a universally superior connectivity estimator. Instead, it provides a common mathematical language for comparing estimators by the partitions they induce on signal space, the information they preserve, and the strength of the cross-measure inferences they support. This language becomes increasingly relevant as EEG research moves toward dynamic, multivariate, source-resolved, and graph-based connectomics.

\section*{Acknowledgment}
The author thanks colleagues and students whose discussions of EEG signal processing, connectivity analysis, and mathematical modeling informed the development of this framework.

\section*{Conflict of Interest}
The author declares no conflict of interest related to this review.

\section*{Data Availability}
No new human-subject data were collected for this article. The manuscript is a mathematical and literature-based review.

\section*{Funding}
No external funding was received specifically for the preparation of this article.

\bibliographystyle{IEEEtran}
\bibliography{ieeeaccess_eeg_fc_refs}

\end{document}